\documentclass[runningheads]{llncs}
\usepackage{graphicx,graphics}
\usepackage{amsfonts}
\usepackage{latexsym,amssymb,amsmath}
\usepackage{url}
\usepackage{algorithm}
\usepackage{algpseudocode}

\usepackage{color}


\newtheorem{fact}{Fact}


\newcommand{\anglebr}[1]{\langle #1 \rangle}

\newcommand{\cA}{{\cal{A}}}
\newcommand{\cS}{{\cal{S}}}
\newcommand{\MST}{{\mbox{\it MST}}}
\newcommand{\MH}{{\mbox{\it MH}}}
\newcommand{\OPT}{{\mbox{\it OPT}}}

\newcommand{\ReduceFastest}{\mbox{Reduce-Fastest}}
\newcommand{\ReduceMax}{\mbox{Reduce-Max}}

\usepackage[T1]{fontenc}

\def\qed{\hfill$\Box$}
\newcommand{\ep}{\qed}


\newcommand{\omitthis}[1]{}

\begin{document}

\pagestyle{headings}

\title{Perpetual maintenance of machines with different urgency requirements%
\thanks{%
L.~G\k asieniec and J.~Min's work was partially supported by Network Sciences and Technologies (NeST)
at University of Liverpool.
This study has been carried out in the frame of ``the Investments for the future'' 
Programme IdEx Bordeaux -- SysNum (ANR-10-IDEX-03-02). 
R.~Klasing’s research was partially supported by the ANR project TEMPOGRAL (ANR-22-CE48-0001).
C.~Levcopoulos and A.~Lingas work were supported by the Swedish Research Council grants 621-2017-03750 and 2018-04001.
T.~Radzik's work was supported in part by
EPSRC grant EP/M005038/1, ``Randomized algorithms for computer networks.''
Part of this work was done while T.~Radzik was visiting the LaBRI as a guest professor of the University of Bordeaux.
T.~Jurdzinski's work was supported by the Polish
National Science Centre project no.\ 2020/39/B/ST6/03288.
Preliminary versions of some of the results 
presented in this paper appeared in Proc.~43rd International Conference on Current Trends in Theory 
and Practice of Computer Science (SOFSEM 2017), LNCS 10139, pp.~229--240, 
Springer (2017)~\cite{DBLP:conf/sofsem/GasieniecKLLMR17}.
} 
}

\author{
Leszek~G\k asieniec \inst{1} \and 
Tomasz Jurdzi\'nski \inst{2} \and
Ralf~Klasing \inst{3} \and Christos~Levcopoulos \inst{4}\\ Andrzej~Lingas \inst{4} \and Jie~Min \inst{1} \and Tomasz~Radzik \inst{5}}
\institute{{Department of Computer Science}, {University of Liverpool}, {Liverpool, UK},
\email{\{l.a.gasieniec, J.Min2\}@liverpool.ac.uk}.
\and {Institute of Computer Science}, 
{University of Wroc\l{}aw}, {Poland},
\email{tju@cs.uni.wroc.pl}.
\and {CNRS}, {LaBRI}, {Universit\'e de Bordeaux}, {France},
\email{ralf.klasing@labri.fr}.
\and {Department of Computer Science}, {Lund University}, {Lund, Sweden},
\email{\{christos.levcopoulos, andrzej.lingas\}@cs.lth.se}.
\and {Department of Informatics}, {King's College London}, {London, UK}, 
\email{tomasz.radzik@kcl.ac.uk}.
}

\titlerunning{Perpetual maintenance schedules}

\authorrunning{L. G\k asieniec et al.}

\sloppy
\maketitle

\begin{abstract}
A garden $G$ is populated by $n\ge 1$ bamboos $b_1, b_2, ..., b_n$ with the respective daily growth
rates
$h_1 \ge  h_2 \ge \dots \ge h_n$. 
It is assumed that the initial heights of bamboos are zero.
The robotic gardener maintaining the garden regularly
attends bamboos and trims them to height zero
according to some schedule. The {\sl Bamboo Garden Trimming Problem} (BGT) is to
design a perpetual schedule of cuts to maintain the elevation of the bamboo garden as low as possible.
The bamboo garden is a metaphor for a collection of machines which have to be serviced,
with different frequencies, by a robot which can service only one machine at a time.
The objective is to design a perpetual schedule of servicing which minimizes
the maximum (weighted) waiting time for servicing.

\vspace{0.1cm}

We consider two variants of BGT.
In {\sl discrete} BGT the robot trims only one bamboo at the end of each day.
In {\sl continuous} BGT
the bamboos can be cut at any time, however, the robot needs time to move from one bamboo to the next.

\vspace{0.1cm}

For discrete BGT, 
we show 
tighter approximation algorithms for the case when the growth rates are balanced and for the general case.
The former algorithm settles one of the conjectures about the Pinwheel problem.
The general approximation
algorithm improves on the previous best
approximation ratio.
For continuous BGT, we propose approximation algorithms
which achieve 
approximation ratios $O(\log \lceil h_1/h_n\rceil)$ and $O(\log n)$.

\vspace{0.1cm}

\noindent
{\bf Key Words:} Bamboo Garden Trimming problem, BGT problem, Perpetual scheduling, Periodic maintenance, Pinwheel scheduling, Approximation algorithms, Patrolling
\end{abstract}


%
%
\section{Introduction}\label{sec:Intro}
We consider a perpetual scheduling problem in which $n\ge 2$ 
(possibly virtual)
machines need to be attended (serviced) with {\sl known} but possibly different frequencies, i.e.\
some machines need to be attended more often than others. 
The frequencies of attending individual machines 
are specified as positive weights $h_1, h_2, \dots, h_n$ 
and the objective is to design a perpetual schedule of attending the machines 
which minimizes the maximum
weighted time any individual machine waits for the next service.
Since a higher weight $h_i$ (comparing with other weights) means that machine $i$ should be attended relatively more frequently, we refer to the weights also as urgency factors.
The same optimization problem arises when a data stream keeps filling a collection of $n$ 
buffers according to a known distribution: buffer $i$ receives $h_i$ units of data 
in each unit of time. 
The objective is to design a perpetual schedule of emptying the buffers 
which minimizes the maximum
occupancy of any individual buffer.

We model such perpetual scheduling problems using the following metaphor of 
the {\sl Bamboo Garden Trimming (BGT) Problem}.
A collection (garden) of $n\ge 2$ bamboos $b_1, b_2, \ldots, b_n$ 
with known respective daily growth rates $h_1, h_2, \dots, h_n$.
We assume that these growth rates are already arranged into a non-increasing sequence:
$h_1 \ge  h_2 \ge \dots \ge h_n > 0$.
Initially the height of each bamboo is set to zero. 
The robotic gardener maintaining
the garden trims bamboos to height zero according to some schedule.
The height of a bamboo $b_i$ at time 
$t \ge 0$ is equal to $(t - t') h_i$, where $t'$ is the last time when
this bamboo was trimmed, or $t' = 0$, if it has never been trimmed by time~$t$.
The main task of the BGT problem 
is to design a perpetual schedule of cuts to keep the highest bamboo in the garden as low as possible, while complying with some specified constraints on the timing of cutting. 
The basic constraints considered in this paper 
are that the gardener can cut only one (arbitrary) bamboo at the end of each day and
is not allowed to attend the garden at any other times.
Once the gardener has decided which bamboo to trim in the current round (at the end of the current day),  then 
the action of actual trimming is instantaneous. 

Referring back to the two scheduling problems mentioned earlier, 
the heights of the growing bamboos would represent the weighted times the machines 
wait for the next service, or the current occupancy of the data buffers.
The action of cutting a bamboo $b_i$ at the end of the current day represents 
attending machine $i$ or emptying buffer $i$ in the current time slot.
Other problems which can be modeled by BGT 
include the perpetual testing of virtual machines in cloud systems~\cite{AKG15}.
In such systems frequency in which virtual machines are tested for undesirable symptoms vary depending on the importance of dedicated cloud
operational mechanisms.

We consider two variants of the BGT problem. 
The constraint that only one bamboo is cut at the end of each day (round) defines
{\em discrete\/} BGT. The gardener has equal access to all bamboos, so can select in each round any bamboo
for cutting.
In the second variant, {\em continuous\/} BGT, 
we assume that for any two bamboos $b_i$ and $b_j$,
we know the time $t_{i,j} > 0$ (which may be fractional) 
that the robot needs to relocate from $b_i$ to $b_j$.
In this variant 
the time when the next bamboo is trimmed 
depends on how far that bamboo is from the bamboo which has just been trimmed.
As in discrete BGT, when the robot arrives at the bamboo to trim it, 
the actual action of trimming is instantaneous.
In this paper we consider symmetric travel times
(that is, $t_{i,j} = t_{j,i}$) and assume that 
the robot travels always along the fastest route, so the travel times 
satisfy the triangle inequality.
We also assume that the robot is initially at the location of $b_1$.
Discrete BGT is the special case of continuous BGT when all travel times $t_{i,j}$, for $i\neq j$, are the same,
while metric TSP is the special case of continuous BGT when all growth rates $h_i$ are the same.
%

In both discrete and continuous cases, we consider algorithms $\cA$ 
which for an input instance $I$ of the form $(h_i:\: 1\le i \le n)$ 
in the discrete case and $[( h_i:\: 1\le i \le n)$, $( t_{i,j}:\: 1\le i,j\le n)]$
in the continuous case,
produce a perpetual (trimming) schedule $\cA(I)$
as a sequence of indices of bamboos $(i_1, i_2, \ldots)$ which defines the order  
in which the bamboos are trimmed.
We are mainly interested in the {\em approximation ratios\/} of such algorithms, 
which are defined in the usual way.
For an input instance $I$ and a trimming schedule $\cS$ for~$I$, 
let $\MH(\cS)$ denote the supremum of the heights of bamboos over all times $t \ge 0$ when the trimming 
proceeds according to schedule $\cS$, and
let $\OPT(I)$ denote 
the infimum of $\MH(\cS)$ over all schedules $\cS$ for~$I$.
The approximation ratio of 
a schedule $\cS$ is defined as $\MH(\cS) / \OPT(I)$ and
the approximation ratio of an algorithm 
$\cA$ is the supremum of $\MH(\cA(I)) / \OPT(I)$ over all input instances $I$.

Regarding the time complexity of BGT algorithms,
we aim at polynomial preprocessing time followed by computation of the consecutive indices of 
the schedule in poly-logarithmic time per one index. 
We will call algorithms with such performance simply polynomial-time (BGT) algorithms.
The computational complexity of continuous BGT is related to the complexity of TSP, as the latter is a special case of the
former. To discuss computational complexity of discrete BGT, we introduce first a lower bound on the height 
of schedules.


For each instance $I$ of discrete BGT with the sum of the growth rates $H = H(I) =h_1+h_2+\dots+h_n$,
a simple and natural lower bound 
on the maximum height of a bamboo in any schedule 
is $\OPT(I) \ge H$.
Indeed, while the heights of all bamboos are at most $H' < H$,
then during each day the {\sl total height} of all bamboos,
that is, the sum of the current heights of all bamboos,
increases at least by $H - H' > 0$ (the total growth over all bamboos is $H$ but only one
bamboo, of height at most $H'$, is cut).
Thus on some day within the first $\lfloor n H'/(H - H') \rfloor + 1$ days 
the total height of the bamboos must exceed $n H'$, so the height of one 
of the bamboos must exceed $H'$.
Observe also that it cannot happen that the maximum height of a bamboo approaches $H$
but never reaches $H$, because there are only finitely many possible heights of bamboos 
which are less than $H$.

There are instances with $\OPT(I) = H$. The obvious one is the uniform instance $h_i \equiv H/n$. A non-uniform example is the input 
instance $I = (1/2, 1/4, 1/4)$, where all bamboos are kept within the $H= 1$ height 
by the schedule with period $(b_1, b_2, b_1, b_3)$.
An example of an input instance with $\OPT(I) > H$ is 
$I = (7/15, 1/3, 1/5)$, for which $H = 1$ 
but $\OPT(I) = 4/3$.
For this instance, 
the schedule with period $(b_1,b_2,b_1,b_2,b_1,b_3)$ does not let any bamboo grow 
above the height $4/3$.
On the other hand, a schedule which keeps the heights of $b_1$ and $b_2$ 
strictly lower than $4/3$ must cut $b_1$ every other day, implying that $b_2$ must also be cut every other day (after the initial couple of days).
Thus, after the initial couple of days, there are no further days available to cut $b_3$, so its height 
grows to infinity.
If we have only two bamboos and their growth
rates are $h_1=1-\varepsilon$, and $h_2=\varepsilon$, for any $0 < \varepsilon 
\le1/2$, then $\OPT(I) = 2(1-\varepsilon)$, so
can be arbitrarily close to $2$.
We note here that for any instance $I$, 
$\OPT(I) \leq 2 H(I)$ 
(this is explained in Section~\ref{s:pinwheel}).

\subsection*{The context and previous related research}

\setcounter{footnote}{0}

Our paper focuses on perpetual maintenance of a given environment where each vital element has its own, 
possibly unique urgency factor. 
This makes it related to {\em periodic scheduling}~\cite{serafini_mathematical_1989}, 
a series of papers 
on the {\em Pinwheel\/} problems~\cite{chan_general_1992,chan_schedulers_1993,holte_pinwheel:_1989}
including the {\em periodic Pinwheel\/} problem~\cite{holte_pinwheel_1992,lin_pinwheel_1997} and
the {\em Pinwheel scheduling\/} problem~\cite{romer_algorithm_1997},
as well as the concept of {\sl P-fairness} in sharing multiple copies of 
some resource among various tasks~\cite{baruah_proportionate_1996,baruah_pfair_1998}.

The Pinwheel problem introduced in~\cite{holte_pinwheel:_1989} can be viewed as a special case of discrete BGT.
The complexity results for the Pinwheel problem presented in~\cite{holte_pinwheel:_1989} imply that 
for a given $K \ge H$, if there is a schedule with height at most $K$, then there is a cyclic schedule with height 
at most $K$, but the shortest such schedule can have exponential length.\footnote{%
Exponential in the size of the input, assuming that the growth rates are rational numbers given as pairs of integers.
} 
This implies that the decision version of discrete BGT can be solved by considering all cyclic schedules 
of up to exponential length, and this can be implemented in PSPACE. 
Further from~\cite{holte_pinwheel:_1989}, 
while the cyclic schedules of height $H$ can also have exponential length, they have concise polynomial-size
representations, and there is a polynomial-time algorithm for checking if a given concise representation of 
a cyclic schedule of height $H$ is valid. This implies that the restricted decision version of discrete BGT 
which asks if there is a schedule of height $H$ is in NP. 
Jacobs and Longo~\cite{DBLP:journals/corr/JacobsL14}
show that there is no pseudopolynomial
time algorithm solving the Pinwheel problem unless SAT has an exact algorithm running in
expected time $n^{O(\log n \log\log n)}$
and consider the complexity of related problems.
However, the exact complexity of 
the Pinwheel problem remains a long-standing open question.

In related research on minimizing the maximum occupancy of a buffer in a system of $n$ buffers,
the usual setting is a game between the player and the adversary
\cite{bender-etal-2015,cinderella,chrobak-etal-ICALP2001}.
The adversary decides how the fixed total increase of data in each round is distributed among the buffers and tries
to maximize the maximum occupancy of a buffer. 
The player decides which 
buffer (or buffers, depending on the variant of the problem)
should be emptied next and tries to minimize the maximum buffer size. 
The upper bounds developed in this more general context can be translated into upper bounds for our
BGT problems, but our aim is to derive tighter bounds for the case 
when the rates of growth of the occupancy of buffers, or the rates of growth of bamboos in our terminology, 
are fixed and known.
Similar models, under the name of 
``cup (emptying) games'',
have been considered, with recent papers including
\cite{%
DBLP:conf/stoc/BenderFK19,%
DBLP:conf/soda/BenderK21,%
DBLP:conf/soda/Kuszmaul20,%
DBLP:conf/stoc/Kuszmaul21,%
DBLP:conf/icalp/KuszmaulN22%
}.
%

The continuous BGT problem is a natural extension of several classical algorithmic problems with
the focus on {\em monitoring} and {\em mobility}, including the {\em Art Gallery Problem}~\cite{ntafos_gallery_1986} 
and its dynamic extension 
 called the {\em $k$-Watchmen Problem}~\cite{urrutia_art_2000}.
In a more recent work on {\em fence patrolling}~\cite{collins_optimal_2013,czyzowicz_boundary_2011,KaKo15} 
the studies
focus on monitoring vital (possibly disconnected) parts of a linear environment where each point is expected to 
be attended with the same frequency. 
Czyzowicz~{\it et~al.\/}~\cite{CGK+15} study monitoring linear environments by robots prone to faults.
Problems similar to continuous BGT are considered also by 
Baller~{\it et~al.\/}~\cite{DBLP:journals/networks/BallerEHS20},
who focus on special cases 
(special metric spaces) to
investigate the boundary between easy (that is, polynomial) and hard cases,
and by Bosman~{\it et al.\/}
\cite{DBLP:journals/algorithmica/BosmanEJMRS22}, who 
minimize the travel cost subject to the feasibility requirement of maintaining 
the specified minimum frequencies of visiting bamboos. 


Probably the most natural strategy to keep the elevation of the bamboo garden low is the greedy approach
of always moving next to the currently highest bamboo and cutting it. 
This approach, called {\sl \ReduceMax}, is particularly appealing in the context of discrete BGT, where 
there are no travel times to be accounted for.
$\ReduceMax$ was considered recently in the context of 
periodic testing of virtual machines in cloud systems~\cite{AKG15}, and 
was also studied in the adversarial setting of the buffer minimization problems mentioned above.
The results presented in~\cite{cinderella} imply a tight upper bound of $H\cdot(H_{n-1} + 1) = \Theta(H \log n)$
on $\MH(\cS)$ for schedules $\cS$ produced by 
$\ReduceMax$ for a variant of the discrete BGT with
the adversary which in each round arbitrarily 
distributes the total daily growth of $H$ among the bamboos.
Here 
$H_k = \sum_{i=1}^{k}\frac{1}{k} = \Theta(\log k)$ is the $k$-th harmonic number.
While this $O(H \log n)$ upper bound applies obviously 
also to 
our non-adversarial discrete BGT, 
when the growth rates are fixed,
it was a long standing open question 
whether there were instances for which $\ReduceMax$ 
lets some bamboos grow to heights $\Omega(H \log n)$, or
even to heights $\omega(H)$.
The experimental work presented in~\cite{AKG15} pointed towards a conjecture that $\ReduceMax$
keeps the maximum bamboo height within $O(H)$, and 
the question has been finally recently answered in Bil{\`{o}} {\it et al.\/}~\cite{BGLPS22},
where 
a bound of $9H$ on the maximum bamboo height under the 
$\ReduceMax$ algorithm was proven. 
Kuszmaul~\cite{DBLP:conf/spaa/Kuszmaul22}
has recently shown that the maximum height of a bamboo under the $\ReduceMax$ cutting strategy is at most $4H$.

As mentioned above, there are instances for which the optimal maximum height can be arbitrarily close to $2H$
(see also Bil{\`{o}} {\it et al.\/}~\cite{BGLPS22} and Kuszmaul~\cite{DBLP:conf/spaa/Kuszmaul22}).
This, however, does not imply a lower bound greater than $1$ on the
approximation ratio of any algorithm (which is defined with respect to the optimum rather than the lower bound $H$).
We note that the input instance $I = (3/8-\varepsilon, 1/4, 1/4)$, where $0 < \varepsilon < 1/24$ can be arbitrarily small, 
shows that the approximation ratio of $\ReduceMax$ cannot be less than~$9/8$.
For this instance, $\OPT(I) = 1$
(for $\varepsilon < 1/24$) with the optimal schedule repeating 
$(b_1,b_2,b_1,b_3)$,
but the $\ReduceMax$ schedule is
$(b_1,b_2,b_3,b_1, \ldots)$, with 
the first bamboo reaching the height 
$9/8 - 3\varepsilon$.
In Section~\ref{sec:Approx-simpleStartegy} we show
that the approximation ratio of $\ReduceMax$ is not less
than $12/7$.



In~\cite{DBLP:conf/sofsem/GasieniecKLLMR17}, 
which included preliminary versions of some of the results presented in this paper,
we introduced a modification of $\ReduceMax$, which we 
called $\ReduceFastest$, to
show the first simple greedy algorithm
achieving constant approximation ratio.
$\ReduceFastest(x)$, where $x>0$ is 
a parameter of the algorithm, works in the following way.
Keep track of the ``tall'' bamboos, defined as having the 
current height at least $x\cdot H$, and 
cut in each step the tall bamboo with the highest growth
rate (no cutting, if there is no tall bamboo).
In~\cite{DBLP:conf/sofsem/GasieniecKLLMR17},
we presented a detailed proof that the approximation ratio
of $\ReduceFastest(2)$ is at most~$4$.
D'Emidio {\it et al.\/}~\cite{DBLP:journals/algorithms/DEmidioSN19}
conducted an extensive
experimental evaluation of various BGT strategies,
which led them to conjecture that $\ReduceMax$, $\ReduceFastest(2)$, and $\ReduceFastest(1)$
keep the maximum height of a bamboo within $2H$, $3H$, and $2H$, respectively.
Subsequently, 
Bil{\`{o}} {\it et al.\/}~\cite{BGLPS22} proved that for $x = 1 + 1/\sqrt{5} \approx 1.45$,
the maximum bamboo height under the $\ReduceFastest(x)$ strategy is not greater than  
an approximation ratio of $(3+\sqrt{5})/2 \approx 2.62$.
They also presented efficient 
implementations of $\ReduceFastest$ and $\ReduceMax$.

Kuszmaul~\cite{DBLP:conf/spaa/Kuszmaul22}
has recently shown that for any $x \ge 2$, 
the strategy $\ReduceFastest(x)$ keeps all bamboos strictly below $(x+1)\cdot H$, proving this way 
the conjecture from~\cite{DBLP:journals/algorithms/DEmidioSN19}
that $\ReduceFastest(2)$ keeps the bamboos below $3H$.
For a lower bound, Kuszmaul~\cite{DBLP:conf/spaa/Kuszmaul22} has shown
that whatever the value of parameter $x$ is,
$\ReduceFastest(x)$ does not give a bound 
better than $(2.01)\cdot H$, 
disproving the conjecture from~\cite{DBLP:journals/algorithms/DEmidioSN19} that 
$\ReduceFastest(1)$ keeps bamboos below $2H$.
The lower bound of $(2.01)\cdot H$
together with
the bound $\OPT(I) \le 2H(I)$ imply that 
the approximation ratio of $\ReduceFastest$ is at least $2.01/2=1.005$.
In Section~\ref{sec:Approx-simpleStartegy}, we give examples showing better 
lower bounds on the approximation ratio of $\ReduceFastest$.
Bil{\`{o}} {\it et al.\/}~\cite{BGLPS22} and
Kuszmaul~\cite{DBLP:conf/spaa/Kuszmaul22}
consider algorithms which guarantee the $2H$ bound on the maximum 
bamboo height (the best possible bound with respect to $H$), the former focusing 
on resource-efficient algorithms and the 
latter on simplicity of computation.
These algorithms therefore have approximation ratios
at most $2$, but it is not known if those ratios 
are strictly less than $2$.

We refer informally 
to BGT algorithms like $\ReduceMax$ and $\ReduceFastest$ as 
{\em online scheduling}. These algorithms are based on 
simple greedy strategies, the trimming schedule is 
revealed while the cutting progresses, and 
the whole cutting process would naturally
adapt to changing growth rates.
An alternative {\em offline 
scheduling} pre-computes the whole (cyclic) schedule. 
This approach would sacrifice 
the flexibility offered by simple greedy 
strategies but hopefully would give better 
approximation ratios.
Indeed, using the Pinwheel results given in~\cite{holte_pinwheel:_1989}, one can
easily obtain an offline 
algorithm $\cA$ which guarantees $\MH(\cA(I)) \le 2 H(I)$ for each input $I$, so 
a $2$-approximation algorithm
for discrete BGT. 
An efficient implementation 
of this algorithm was developed
in~\cite{BGLPS22}.
Recently, \cite{Ee21} has shown
an offline scheduling algorithm, 
also based on Pinwheel results, which
guarantees $12/7$-approximation --
the best approximation ratio for 
the general problem shown prior 
to our paper.
We emphasise that our online/offline 
categorisation of scheduling algorithms
is informal, and does not refer to the 
availability of input 
(in both cases the whole input is known in advance) but only indicates the
general nature of algorithms.
Similarly, to distinguish algorithms 
like $\ReduceMax$ and $\ReduceFastest$ from
more complex approaches,
Kuszmaul~\cite{DBLP:conf/spaa/Kuszmaul22}
refers informally 
to this type of algorithms as 
\emph{simple algorithms}.

The optimisation objective in our work, and in the related work discussed above, 
is {\it minimizing the maximum\/} (weighted) waiting time (the maximum height of a bamboo). 
Anily~{\it et al.}~\cite{DBLP:journals/dam/AnilyGH98,DBLP:journals/anor/AnilyGH99} 
consider perpetual scheduling of servicing machines with the objective 
of {\it minimizing the average\/} (weighted) time of waiting for maintenance
(using different but equivalent terminology of 'linearly increasing operational costs').
They show that there is always an optimal schedule which is cyclic and propose and evaluate 
various strategies of computing schedules 
for the general case of $n$ machines~\cite{DBLP:journals/dam/AnilyGH98} and
for the special case of $3$ machines~\cite{DBLP:journals/anor/AnilyGH99}.  


\subsection*{Structure of the paper and our contributions}

In Sections~\ref{sec:Approx-simpleStartegy}
to~\ref{sec:impoved-approx}, we consider 
the discrete BGT.
In Section~\ref{sec:Approx-simpleStartegy}, we 
derive some lower bounds on 
the approximation ratios of the simple strategies $\ReduceMax$ and $\ReduceFastest$.
In Section~\ref{sec:BGT-Pinwheel}, we
elaborate on the Pinwheel problem and its relation 
to discrete BGT, laying foundations for our main results.
In Section~\ref{sec:DiscreteBGT-offline}, we present our main approximation algorithm 
for discrete BGT, 
which is an offline algorithm derived by 
further exploration of 
the relation between discrete BGT and
the Pinwheel problem and has approximation ratio
$(1+O(\sqrt{h_1/H}))$. 
The benefits of the relation between the discrete BGT and Pinwheel problems extend both ways. 
On the one hand, our approximation algorithm uses 
properties of the Pinwheel problem. On the other hand, 
the approximation ratio which we achieve settles one of
the conjectures about the Pinwheel problem as explained in Section~\ref{sec:BGT-Pinwheel}.

In Section~\ref{sec:impoved-approx}, 
we turn our attention to general approximation bounds.
As mentioned earlier, the best previous general bound
is $12/7$ given in~\cite{Ee21}. 
We show an algorithm with approximation ratio $8/5+o(1) < 12/7$.
This improvement is based on a new approach of splitting the growth rates into 
two groups of large and small rates,
computing good schedules separately for each group, and then merging
these two schedules into one schedule for all rates.
The algorithm uses the
$(1+O(\sqrt{h_1/H}))$-approximation algorithm to compute a schedule 
for the group of small rates.

In Section~\ref{sec:ContinuousBGT}, we show algorithms for continuous BGT
with approximation ratios $O(\log\lceil{h_1}/{h_n}\rceil)$ and $O(\log n)$.\footnote{%
As Metric TSP is a subproblem of continuous BGT, it is NP-hard to approximate
continuous BGT with a factor better than 123/122 \cite{KLS15}.
}
We also discuss how tight these approximation ratios are. 
We show instances of
continuous BGT such that for any schedule the maximum
bamboo height is greater by a $\Theta(\log n)$
factor than the lower bounds which we use in the analysis of approximation ratios. 
Thus for these input instances our $O(\log n)$-approximation
algorithm computes in fact schedules with constant approximation ratios.
We also show instances for which this algorithm computes $\Theta(\log n)$-approximate
schedules.

\section{Approximation ratio of simple strategies}
\label{sec:Approx-simpleStartegy}

In Section~\ref{sec:Intro}, we presented the previous 
work on the simple strategies $\ReduceMax$ and 
$\ReduceFastest$, which focused on analysing 
the maximum height of any bamboo in relation to 
the sum of the growth rates $H$. 
The value $H$ is, however, only a lower bound on 
the optimum, so the bounds in relation to $H$ do not
necessarily give good bounds on the approximation 
ratios, which refer to the optimal values.
Recall that there are BGT instances for which the 
optimal (minimal) height is arbitrarily close to $2H$.
In this section, we show some lower bounds 
on the approximation ratios of these two strategies.






\paragraph{Approximation ratio of \ReduceMax.} $\;$


We show that the approximation ratio of $\ReduceMax$
is not less than $12/7$ by considering the following
BGT instances.
Let $i=7k+3$, for any integer $k\ge 1$. We have a sequence of $i+1 = 7k+4$ growth rates
partitioned into two groups.
In group 1, we have only the largest 
growth rate 
$h_1 = \frac{3k}{i} = 3/7-\frac{9}{7i}$\/,
while the remaining $i$ growth rates belong to group 2 and are all equal to $\frac{1}{2i}\/.$
It is easy to see that the optimal (minimal) 
height for these instances is at most $1$:
schedule $b_1$ every other time slot and the other
bamboos in the remaining slots in the round-robin
manner.

We now look at the computation of $\ReduceMax$ on this sequence of growths and consider three initial stages defined as follows.
\begin{itemize}
    \item Stage 1: all bamboos in group 2 are smaller than $h_1,$ so $b_1$ is cut during each round.
    \item Stage 2: all bamboos in group 2 are smaller than $2h_1$, but some of them are taller than $h_1$, implying that $b_1$ is cut in every other round,
    whenever it reaches height $2h_1$.
    \item Stage 3: all bamboos in group 2 are smaller than $3h_1$, but some of them are taller than $2h_1$, implying that $b_1$ is cut in every third round, whenever it reaches height $3h_1$. 
\end{itemize}

The main idea is to show not only that Stage 3 is not empty, but also that at the end of Stage 3, there are still 3 bamboos in group 2 which have never been cut. Such bamboos are taller than $3h_1$, so $b_1$ is not cut for 3 consecutive rounds, growing
to the height $4h_1$.

In Stage 1 no bamboos from group 2 are cut, and they all grow to height $h_1,$ at the end of Stage 2, some bamboos from group 2 grow to height $2h_1$, and at the end of Stage 3 some bamboos from group 2 grow to height $3h_1.$ 
As in Stage~2 and Stage 3 some bamboos from group 2 are cut, we need to show that we have enough bamboos in this group for some of them not to be cut in either of the three stages.

For each Stage $j$, $j = 1,2,3$, there is a bamboo in group 2 which adds $h_1$ to its height (growing from $(j-1)h_1$ to $jh_1$), so there are ${h_1}/{\frac{1}{2i}} = h_1\cdot 2i=6k$ rounds in every stage. Thus $3k$ bamboos from group 2 are cut in Stage 2 (one group-2 bamboo cut in every other round in this stage), and $4k$ bamboos from group 2 are cut in Stage 3 (two of them are cut in each sequence of 3 consecutive rounds in this stage). Thus, after all three stages, at most $7k$ different bamboos from group 2 will be cut altogether. Hence, at least 3 bamboos from group 2 will not be cut in either of the three stages. 3 of these bamboos will be cut in the next 3 rounds. Therefore, bamboo $b_1$ will grow to height $4h_1=12/7-o(1)$.




While $\ReduceMax$ pushes 
the maximum bamboo height close to $12/7$
for our BGT instances,
this may be happening only in the initial period of the schedule.
However, it is easy to see that in these instances 
$b_1$ will have to keep growing to the height
of $3h_1$, even in the long run. Otherwise, after some round $t$,
the height of each bamboo would never go
above $2h_1 + o(1) < H$, which is not sustainable.
Thus, denoting by $m_t$ the maximum height of any bamboo at any round 
after round $t$, our input instances show that we can get $m_t \ge 9/7-\varepsilon$,
for arbitrarily small $\varepsilon > 0$.


\paragraph{Approximation ratio of \ReduceFastest.} $\;$

For $0 < x < 1$, the input instance $I=(x,\varepsilon)$,
where $0 < \varepsilon < \min\{x,1-x\}$, shows that the approximation ratio of $\ReduceFastest(x)$ is unbounded, as
noted in Kuszmaul~\cite{DBLP:conf/spaa/Kuszmaul22}.
For this instance, $\OPT(I) = 2x$ with the optimal schedule repeating $(b_1,b_2)$, but the $\ReduceFastest(x)$ schedule never cuts bamboo $b_2$, hence the height of bamboo $b_2$ grows to infinity.

For $1 \le x < 2$, the input instance $I=(x/2-\varepsilon, \varepsilon)$, where $0 < \varepsilon < x/2$, shows that the approximation ratio of $\ReduceFastest(x)$ cannot be less than $3/2$.
For this instance, $\OPT(I) = x-2\varepsilon$ with the optimal schedule repeating $(b_1,b_2)$, but the $\ReduceFastest(x)$ schedule only cuts bamboo $b_1$ every 3 rounds, hence bamboo $b_1$ reaches the height $3x/2 - 3\varepsilon$.

For the parameter $x\geq 2$, 
the input instance $I=(1-\varepsilon,\varepsilon)$, where $0 < \varepsilon < 1/2$, shows that the approximation ratio of $\ReduceFastest(x)$ cannot be less than $3/2$.
For this instance, $\OPT(I) = 2-2\varepsilon$ with the optimal schedule repeating $(b_1,b_2)$, but the $\ReduceFastest(x)$ schedule only cuts bamboo $b_1$ at most every 3 rounds, hence bamboo $b_1$ reaches at least the height $3 - 3\varepsilon$.


\section{Discrete BGT problem and Pinwheel}
\label{sec:BGT-Pinwheel}

%
The input of the Pinwheel problem is a sequence 
$V= \anglebr{f_{1},f_{2},\ldots,f_{n}}$ of integers $2 \le f_{1} \le f_{2}\le \ldots \le f_{n}$
called ({\em pinwheel\/}) {\em frequencies}.
The objective is to specify an infinite sequence $S$ of indices 
drawn from the set ${1,2,\ldots,n}$ such that
for each index $i$, any sub-sequence of $f_{i}$ 
consecutive elements in $S$
includes at least one index $i$, 
or to establish that such a sequence does not exists. 
A sequence $S$ with this property is called a schedule of $V$, and if such a sequence does not exist, then
we say that the sequence of frequencies $V$
is not feasible or that it cannot be scheduled.
The Pinwheel problem is a special case of  discrete BGT: 
an input instance $\anglebr{f_{1},f_{2},\ldots,f_{n}}$ of Pinwheel is feasible if, and only if,
the optimal (minimum) height for the input instance 
$(1/f_{1}, 1/f_{2},\ldots, 1/f_{n})$ of discrete BGT is at most $1$.

The Pinwheel problem was introduced in~\cite{holte_pinwheel:_1989}, 
where some complexity 
results were presented and some classes of feasible sequences of frequencies were established.
It is easy to see
that it is not possible to schedule any instance $V$ whose {\em density} 
$D(V)\equiv \sum_{i=1}^{n}{1}/{f_{i}}$ is greater than $1$,
since in any feasible schedule each frequency $f_i$ takes 
at least ${1}/{f_{i}}$ fraction of the slots.\footnote{%
More precisely, in each prefix of length $T-f_n$ of a feasible schedule, where $T$ is
arbitrarily large,
each frequency $f_i$ must take at least $(1/f_i)(T-f_i)$ slots, implying 
$\sum_{i=1}^{n}{1}/{f_{i}} \le 1$.
}
(This upper bound of~$1$ on the density of a feasible instance of the Pinwheel problem 
is a special case of the lower bound of $H$ on the maximum height for
an instance of discrete BGT.) 
The sequence of frequencies $(2,4,4)$ is an example of a feasible 
instance of the Pinwheel problem with density~$1$.
On the other hand,
the input instances $(2,3,M)$, where $M$ is an arbitrarily large integer,
show that there are instances with densities arbitrarily close to $5/6$ which are not feasible.
One of the conjectures for the Pinwheel problem, which remains open, 
is that $5/6$ is the universal threshold guaranteeing feasibility of input instances.
That is, it has been conjectured (ever since the pinwheel problem was introduced) that
any instance with density at most $5/6$
can be scheduled.
The current best proven bound is $3/4$~\cite{fishburn_pinwheel}.

Our work is related to another conjecture for the Pinwheel problem, made by 
Chan and Chin~\cite{chan_general_1992},
that when the first frequency $f_1$ keeps increasing, then the density threshold guaranteeing 
feasibility keeps increasing to $1$.
To be more precise, 
let ${\cal{V}}$, ${\cal{V}}_{yes}\subseteq {\cal{V}}$ and ${\cal{V}}(f,\Delta)\subseteq {\cal{V}}$
denote the set of all instances $V$
of the Pinwheel problem with density $D(V) \le 1$, the set of all feasible instances and 
the set of instances with $f_1 = f$ and density $D(V) \le \Delta$, respectively. 
Define
$d(f) \equiv \sup\{ \Delta:\; {\cal{V}}(f,\Delta) \subseteq {\cal{V}}_{yes}\}$ as
the density threshold guaranteeing feasibility of instances with the 
first frequency equal to $f$. 
That is, each input instance $(f_1=f, f_2, f_3, \ldots)$ with density 
less than $d(f)$ is feasible, while for each $\epsilon >0$, there 
is an infeasible instance with density less than $d(f)+\epsilon$.
Chan and Chin~\cite{chan_general_1992,chan_schedulers_1993} conjecture 
that $\lim_{f\rightarrow\infty} d(f) = 1$, consider a number 
of heuristics for the Pinwheel problem (referred to as  {\em schedulers})
and analyze  the guarantee density threshold
$$d_{\cal{A}}(f) \equiv \sup\{ D:\; 
\mbox{heuristic ${\cal{A}}$ schedules each instance in  ${\cal{V}}(f,D)$}\}$$
for each considered heuristic~${\cal{A}}$.
They derive lower bounds $\ell_{\cal{A}}(f)$ on the values $d_{\cal{A}}(f)$, but 
for each of their lower bounds, $\lim_{f\rightarrow\infty} \ell_{\cal{A}}(f)$ is strictly less than $1$,
which leaves possibility that $\lim_{f\rightarrow\infty} d_{\cal{A}}(f)$ 
is also strictly less than~$1$.\footnote{%
Chan and Chin~\cite{chan_general_1992,chan_schedulers_1993} 
showed that for some algorithms which they considered $\lim_{f\rightarrow\infty} d_{\cal{A}}(f)$
is actually strictly less than $1$. For the other algorithms, they left unanswered the question whether $\lim_{f\rightarrow\infty} d_{\cal{A}}(f) = 1$.}
Thus~\cite{chan_general_1992,chan_schedulers_1993} left open the question of designing an algorithm ${\cal{A}}$ for which 
$\lim_{f\rightarrow\infty} d_{\cal{A}}(f) = 1$, and there have not been other results in this direction prior to our work.
Such an algorithm 
would immediately imply 
that $\lim_{f\rightarrow\infty} d(f) = 1$.

Our $(1+O(\sqrt{h_1/H}))$-approximate polynomial-time algorithm for discrete BGT
applied to input instances $(1/f_{1}, 1/f_{2},\ldots, 1/f_{n})$
is a polynomial-time scheduler for Pinwheel input instances $\anglebr{f_{1},f_{2},\ldots,f_{n}}$
with the guarantee density threshold $1-O(\sqrt{1/f_1})$.
This threshold tends to $1$ with increasing $f_1$, proving the conjecture that 
$\lim_{f\rightarrow\infty} d(f) = 1$.
To see that this Pinwheel scheduler has indeed 
the guarantee density threshold $1-O(\sqrt{1/f_1})$,
let $c>0$ be a constant such that the approximation ratio of our algorithm
for discrete BGT is at most $1+ c\sqrt{h_1/H}$.
If the density $D(V)$ of a Pinwheel instance $V = \anglebr{f_{1},f_{2},\ldots,f_{n}}$ is at most $1 - c/\sqrt{f_1}$, 
then the BGT schedule computed for the input $(1/f_{1}, 1/f_{2},\ldots, 1/f_{n})$
does not let the height of any bamboo
go above  (note that here $H= D(V)<1$):
\begin{eqnarray*}
D(V)\left((1+ c\sqrt{(1/f_1)/D(V)}\right) & = & D(V) + c\sqrt{D(V)/f_1} \\
& \le & 1 - c/\sqrt{f_1} + c\sqrt{D(V)/f_1} \; \le  \: 1.
\end{eqnarray*}
Thus the computed schedule is a feasible schedule for the Pinwheel instance.

%

\section{Discrete BGT by offline scheduling}\label{sec:DiscreteBGT-offline}

In this section we focus on offline scheduling which permits tighter approximation
than the approximation of online algorithms
discussed 
in Section~\ref{sec:Approx-simpleStartegy}.
These results are achieved by exploring 
the relationship between BGT and the Pinwheel scheduling problem.
Some known facts about Pinwheel scheduling 
give immediately a 2-approximation BGT algorithm. Our main result in this section is 
a $(1+O(\sqrt{h_1/H}))$-approximation algorithm. 
One of the consequences of this approximation algorithm is that it settles the conjecture made for the Pinwheel problem
that if the first (smallest) frequency keeps increasing, the 
density threshold which guarantees feasibility increases to $1$ (cf.~Section~\ref{sec:BGT-Pinwheel}).



\subsection{Reducing discrete BGT to Pinwheel scheduling}\label{s:pinwheel}

We use the notation for the Pinwheel problem introduced in Section~\ref{sec:BGT-Pinwheel}.
Each feasible input sequence of frequencies $f_1 \le f_2 \le \cdots \le f_n$ has
the density $D = \sum_{i=1}^{n} 1/f_i$ at most one.
We also know that any instance with density at most $3/4$ is feasible~\cite{fishburn_pinwheel}, 
and it is conjectured that any instance with density at most $5/6$ is feasible.
%
%
To see the relationship between the BGT and the Pinwheel
problem, define for a BGT input instance $I = (h_1 \ge  h_2 \ge \dots \ge h_n > 0)$
the sequence of frequencies
$f'_{i}=H/h_{i}$, $i = 1,2, \ldots, n$.
This sequence is a \emph{pseudo-instance} $\anglebr{f'_1, f'_2, \ldots, f'_n}$
of Pinwheel (pseudo, since these frequencies
are rational numbers rather than integers) with  density:
\[
D'={\displaystyle \sum_{i=1}^{n}}\frac{1}{f'_{i}}
={\displaystyle \sum_{i=1}^{n}}\frac{h_{i}}{H}=1.
\]
We multiply the frequencies $f'_i$ by $1+\delta$ to obtain another 
pseudo-instance $f''_{i}= f'_{i}(1+\delta)$, $i = 1,2, \ldots, n$, 
with the density reduced to $1/(1+\delta) <1$,
where $\delta>0$ is a suitable parameter.
%
%
Finally, we obtain a (proper) instance $V(I,\delta) = \anglebr{f_1, f_2, \ldots, f_n}$ of the Pinwheel problem
by reducing each frequency $f''_i$ to an integer $f_i \le f''_i$.
We require that the integer frequencies~$f_i$ are not greater than the rational frequencies $f''_i$, but
we do not specify at this point their exact values, 
leaving this to concrete algorithms.
Reducing frequencies $f''_i$ to $f_i$ increases the density of the sequence.
The room for this increase of the density was made by 
the initial decrease of the density to $1/(1+\delta)$.


\begin{lemma}\label{l:delta}
If $I$ is an instance of BGT, $\delta > 0$ and an instance
$V(I,\delta)$ of the Pinwheel problem is feasible, then a feasible schedule for
this Pinwheel instance $V(I,\delta)$ is
a $(1+\delta)$-approximation schedule for the BGT instance $I$.
\end{lemma}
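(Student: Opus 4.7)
The plan is to translate the Pinwheel feasibility condition directly into a bound on the maximum bamboo height and then compare to the lower bound $\OPT(I) \geq H$ recalled in the introduction.

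First I would fix notation: let $\mathcal{S}$ be a feasible Pinwheel schedule for $V(I,\delta) = \langle f_1, \ldots, f_n \rangle$, regarded as an infinite sequence $(i_1, i_2, \ldots)$ of bamboo indices, and interpret it as a BGT trimming schedule in the natural way — on day $t$, the robotic gardener cuts bamboo $b_{i_t}$. The key observation is that the Pinwheel feasibility condition says that every window of $f_i$ consecutive positions of $\mathcal{S}$ contains at least one occurrence of index $i$, which means that for each $i$ the gap (in days) between any two consecutive cuts of $b_i$ is at most $f_i$, and moreover the first cut of $b_i$ occurs no later than day $f_i$.

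From that structural fact, the height bound is immediate: between any two consecutive cuts $b_i$ can accumulate at most $f_i \cdot h_i$ units of height, and since $b_i$ starts at height $0$, the same bound applies before its first cut. Using the definition $f_i \leq f''_i = (1+\delta)\,H/h_i$ from the construction of $V(I,\delta)$, this yields
\[
\MH(\mathcal{S}) \;\leq\; \max_{1 \leq i \leq n} f_i \cdot h_i \;\leq\; \max_{1 \leq i \leq n} (1+\delta)\,\frac{H}{h_i}\cdot h_i \;=\; (1+\delta)\,H.
\]

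To conclude, I would invoke the lower bound $\OPT(I) \geq H = H(I)$ established in the introduction, which gives
\[
\frac{\MH(\mathcal{S})}{\OPT(I)} \;\leq\; \frac{(1+\delta)\,H}{H} \;=\; 1+\delta,
\]
so $\mathcal{S}$ is a $(1+\delta)$-approximate schedule for $I$, as claimed. There is no real obstacle here; the only thing worth stating carefully is the correspondence between Pinwheel windows and inter-cut gaps (including the treatment of the very first cut), since the rest of the argument is a one-line computation using the definition of $f''_i$ and the lower bound $\OPT(I)\geq H$.
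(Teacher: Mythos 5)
Your proof is correct and follows essentially the same route as the paper: you use Pinwheel feasibility to bound the inter-cut gap for $b_i$ by $f_i$, translate that to a height bound $f_i h_i \le (1+\delta)H$ via the definition of $f_i \le f''_i$, and then compare against $\OPT(I) \ge H$. The paper's proof is more terse and leaves the final comparison to $\OPT(I) \ge H$ implicit, but the substance is identical.
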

\begin{proof}
In a feasible schedule for $V(I,\delta)$, two consecutive occurrences of an index $i$
are at most $f_i \le {H(1+\delta)}/{h_i}$ slots apart.
This means that if this schedule is used for the BGT instance $I$, then the height of $b_i$ 
is never greater than 
$h_i\cdot f_i \le H(1+\delta)$.
\ep\end{proof}

In view of Lemma~\ref{l:delta}, 
the goal is to get a {\em feasible\/} instance $V(I,\delta)$ of Pinwheel for as small value of $\delta$ as possible.
We also want to be able to compute efficiently a feasible schedule for $V(I,\delta)$, if one exists.
By decreasing the rational frequencies $f''_i$ to integer frequencies $f_i$, we increase the density of the Pinwheel instance,
making it possibly harder to schedule. Thus we should aim at decreasing the frequencies $f''_i$ only as much as necessary.
However, simply rounding down the frequencies $f''_i$ to the nearest integers might not be the best way since the 
integer frequencies obtained that way might not be sufficiently ``regular'' to imply a feasible schedule.


\subsubsection*{\bf A 2-approximation algorithm.}
%
To give a simple illustration how Lemma~\ref{l:delta} can be used, we refer
to the result from~\cite{holte_pinwheel:_1989} which says that any instance
of Pinwheel with frequencies being powers of 2 and the density at most 1 can be scheduled and 
a feasible schedule can be easily computed.
For an instance $I$ of BGT, we take the instance $V(I,1)$ where the frequencies $f''_i$ are rounded 
down to the powers of $2$. Multiplying first the frequencies by $2$ decreases the density to $1/2$.
The subsequent rounding down to the powers of $2$ decreases each frequency less than by half, so the density of the instance
increases less than twice. Thus the final instance $V(I,1)$ of Pinwheel has the density less than $1$ and all frequencies are
powers of $2$, so it can be scheduled 
and, by Lemma~\ref{l:delta}, its feasible schedule is a $2$-approximate schedule for the original BGT instance $I$.   
In fact, this approach shows that 
$\OPT(I) \le 2H(I)$ for any BGT instance $I$.

It was shown in~\cite{fishburn_pinwheel} that every instance of Pinwheel with density
not greater than $3/4$ is feasible and, tracing the proof given in~\cite{fishburn_pinwheel},
one can obtain an efficient scheduling algorithm for such instances.
For a BTG instance $I$, 
the Pinwheel instance $V(I,\delta) = \anglebr{f_1, f_2, \ldots, f_n}$, where 
$\delta = 1/3 + h_1/H$ and 
$f_i = \lfloor (1+\delta)H/h_i \rfloor$ has density less than $3/4$ since
\[ \frac{1}{f_i} 
\: < \: \frac{1}{(1+\delta)H/h_i - 1} 
\: = \: \frac{h_i}{H}\cdot\frac{1}{1+\delta- h_i/H} 
\: \le \: \frac{h_i}{H} \cdot \frac{1}{1+\delta - h_1/H} 
\: = \: \frac{3}{4}\cdot \frac{h_i}{H}.
\]
Thus Lemma~\ref{l:delta} and Pinwheel schedules for instances with density 
at most $3/4$ give the following lemma.

\begin{lemma}\label{Approx-4-3}
There is a polynomial-time $(4/3 +h_1/H)$-approximate scheduling algorithm for the BGT problem.
\end{lemma}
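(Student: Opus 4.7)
The plan is to invoke Lemma~\ref{l:delta} with a carefully chosen value of $\delta$, rounding the rational pseudo-frequencies down to integers in the simplest possible way, and then appeal to the Fishburn--Lagarias result that every Pinwheel instance of density at most $3/4$ admits a feasible schedule. Concretely, given a BGT instance $I$ with $H = \sum_i h_i$, I will set $\delta := 1/3 + h_1/H$ and define the integer frequencies $f_i := \lfloor (1+\delta)H/h_i \rfloor$. This is the coarsest natural rounding of the scaled rational frequencies $f''_i = (1+\delta)H/h_i$ that still respects the required inequality $f_i \le f''_i$ from the construction of $V(I,\delta)$ in Section~\ref{s:pinwheel}.

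The key calculation is to verify that the resulting Pinwheel instance $V(I,\delta)$ has density strictly below the Fishburn threshold $3/4$. Using $f_i > (1+\delta)H/h_i - 1$, one gets
\[
\frac{1}{f_i} \;<\; \frac{h_i}{H}\cdot\frac{1}{1+\delta-h_i/H} \;\le\; \frac{h_i}{H}\cdot\frac{1}{1+\delta-h_1/H} \;=\; \frac{h_i}{H}\cdot\frac{1}{4/3} \;=\; \frac{3}{4}\cdot\frac{h_i}{H},
\]
where the middle inequality uses the monotonicity $h_i \le h_1$ and the final equality uses $1+\delta-h_1/H = 4/3$ by the choice of $\delta$. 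Summing over $i$ and using $\sum_i h_i = H$ yields density $\sum_i 1/f_i < 3/4$, so $V(I,\delta)$ is feasible.

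With feasibility in hand, I will cite the algorithmic content of~\cite{fishburn_pinwheel}: the proof there is constructive and can be implemented to produce, in polynomial preprocessing time, a concise representation of a feasible schedule from which successive indices can be read off in poly-logarithmic time per index, matching the paper's notion of a polynomial-time BGT algorithm. Applying Lemma~\ref{l:delta} to this schedule then immediately gives a $(1+\delta) = (4/3 + h_1/H)$-approximate schedule for $I$.

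The only non-routine part is the last step: turning the existential statement ``density $\le 3/4$ implies feasibility'' from~\cite{fishburn_pinwheel} into an efficient scheduler. If the proof in~\cite{fishburn_pinwheel} were purely combinatorial and non-constructive this would be a genuine obstacle, but in fact its argument proceeds by an explicit frequency-merging and rounding construction whose steps can each be carried out in polynomial time, so the algorithmic extraction is straightforward; I will only need to sketch this extraction rather than redo the entire feasibility proof.
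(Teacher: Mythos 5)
Your proposal matches the paper's proof essentially verbatim: the same choice $\delta = 1/3 + h_1/H$, the same rounding $f_i = \lfloor (1+\delta)H/h_i\rfloor$, the same chain of inequalities establishing density below $3/4$, the same appeal to~\cite{fishburn_pinwheel} (including the remark that its proof can be traced to extract an efficient scheduler), and the same final application of Lemma~\ref{l:delta}. Correct, and no meaningful divergence from the paper's argument.
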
 

When $h_1/H$ decreases to $0$, then this approach 
gives 
approximation ratios decreasing to $4/3$,
improving on the $2$-approximation. 
We are, however, looking for an algorithm which would compute BGT schedules with 
approximation ratio decreasing to $1$ when $h_1/H$ decreases to $0$.

%

\omitthis{
\begin{proof}
Follows directly from Lemma~\ref{l:delta}.
\ep\end{proof}
}

\subsection{\bf A $(1+O(\sqrt{h_1/H}))$-approximation algorithm}\label{SubSection-approx-algo-small-rates}

\noindent
Our approach to a better BGT approximation is again through powers-of-two Pinwheel instances, 
as in the 2-approximation algorithm, but now we 
derive an appropriate powers-of-two instance through a process of gradual transformations.
We start with greater granularity of frequencies than powers of two by reducing 
the rational frequencies $f''_i = H(1+\delta)/h_i$  
to the closest values of the form $2^{k}(1+{j}/{C})$, where $k$ is an integer, $C=2^{q}$ for some
integer $q \ge 0$, and 
$j$ is an integer in $[0,C-1]$. 
We set the parameter $q$ in such a way that we always have $q \le k$, so the 
frequencies $2^{k}(1+{j}/{C})$ are integral, forming a proper instance of Pinwheel.
The values of $q$ and $C$ are not fixed constants as they depend on the input.
To show that the obtained instance $V(I,\delta)$ of Pinwheel is feasible, for a suitable choice of $\delta$,
and to construct a schedule for this instance,
we use the following two observations.

\vspace*{1ex}

\noindent
{\em Observation 1.} 
Let $V$ be an instance of Pinwheel which has two equal even frequencies $f_i = f_j = 2f$.
If the instance $V'$ obtained from $V$ by replacing these two frequencies with one frequency $f$ is feasible,
then so is instance $V$. 
Note that such updates of a sequence of frequencies do not change its density.
We obtain a schedule for $V$ from a schedule for $V'$ by replacing the occurrences of the frequency $f$
alternatingly with the frequencies $f_i$ and $f_j$.
In our algorithm we will be replacing pairs 
of equal frequencies $2^{k}(1+{j}/{C})$ with one frequency $2^{k-1}(1+{j}/{C})$.
%

\vspace*{1ex}

\noindent
{\em Observation 2.} 
Generalizing the previous observation,
let $V$ be an instance of Pinwheel which has $m$ equal frequencies $f_{i_1} = f_{i_2} =  \cdots  = f_{i_m} =  m f$, where 
$f$ is an integer.
If the instance $V'$ obtained from $V$ by replacing these $m$ frequencies with one frequency $f$ is feasible,
then so is instance $V$ and a schedule for $V$ can be easily obtained from a schedule for $V'$
(in the schedule for $V'$, for each $i \ge 0$ and $1 \le q \le m$, 
replace the $(im + q)$-th occurrence of $f$  with $f_{i_q}$). 
As before, such updates of a sequence of frequencies do not change its density. 
In our algorithm 
we will be combining $m_{j}=C+j$ frequencies $2^{k}(1+\frac{j}{C})$ into one frequency 
${2^{k}(1+{j}/{C})}/{m_{j}} = {2^{k}}/{C}$, which will 
be a power of 2. 

\vspace*{1ex}

We are now ready to describe our algorithm.
Step 1 of the algorithm initializes the Pinwheel context by changing the BGT input instance of the rates of growth 
$(h_1, h_2, \ldots, h_n)$ into 
a Pinwheel pseudo-instance $\anglebr{f''_1, f''_2, \ldots, f''_n}$ of rational frequencies.
Step 2 converts this pseudo-instance into a proper Pinwheel instance $\anglebr{f_1, f_2, \ldots, f_n}$
of integer frequencies.
Steps 3--5 transform this Pinwheel instance
into a powers-of-$2$ instance $\anglebr{g_1, g_2, \ldots, g_r}$, $r \le n$.
Steps 2--5 are illustrated in Figure~\ref{f:MainAlgo}.
The final step 6 computes first a schedule for the powers-of-2 instance $\anglebr{g_1, g_2, \ldots, g_r}$
and then expands it to a schedule
for the Pinwheel instance $\anglebr{f_1, f_2, \ldots, f_n}$, which is returned as 
the computed schedule for the BGT input instance. 

\vspace*{2.3ex}

\begin{figure}
\begin{center}
\includegraphics[scale=1.0]{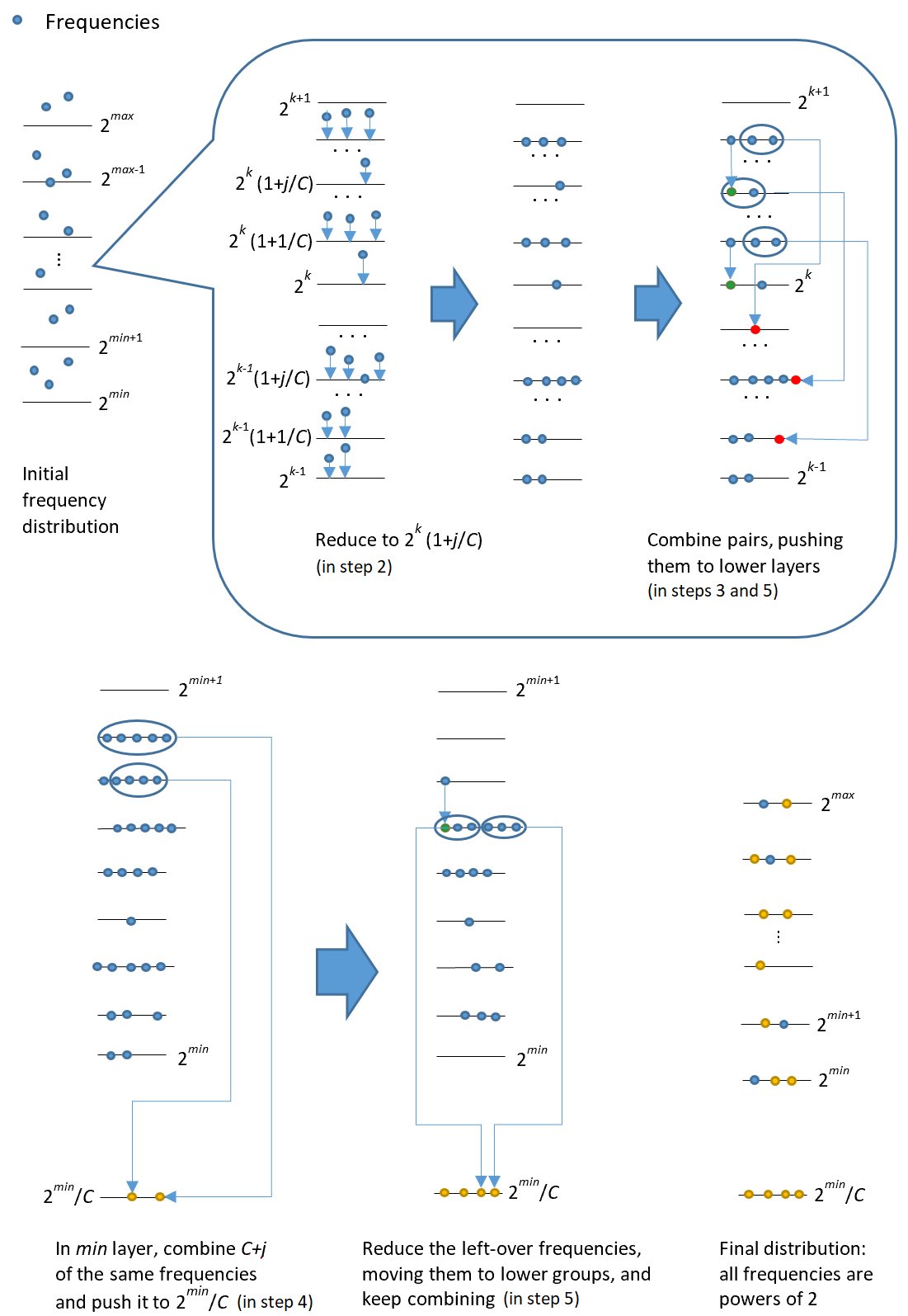}
\end{center}
\caption{Illustration of the execution of the Main Algorithm.
The top half: transformation of frequencies in layers other 
than the $min$ layer. The bottom half: transformation in the $min$ layer and the final powers-of-2 frequencies.}
\label{f:MainAlgo}
\end{figure}

\noindent
{\bf The Main Algorithm} \\[6pt]
{\bf Input}:
BGT instance $I = (h_1 \ge  h_2 \ge \dots \ge h_n)$. \\[3pt]
{\bf Output}: A (cyclic) perpetual schedule of $I$ specified by pairs $(p_i , q_i)$, for $1 \le i \le n$.
Item $b_i$ occurs in the schedule at positions $p_i + kq_i$, for $k \ge 0$.

\begin{enumerate}
\item 
Set the parameter $\delta = 3\sqrt{h_1/H} \le 3$.

 \vspace{.5ex}
 
Form a pseudo-instance $\anglebr{f''_1 \le f''_2 \le \ldots \le f''_n}$ of Pinwheel, by setting
$f''_i = (1+\delta)H/h_i$.
The density of this pseudo-instance is $\sum_{i}1/f_i'' = {1}/(1+\delta)$, 
and 
the setting of the parameter $\delta$ implies that $f''_1 \ge 4$, regardless of the value of $h_1$.
Let 
${min}\ge 2$
be the largest integer such that $2^{min} \le f''_1$
and let $max\ge min $ be the smallest integer
such that $2^{max+1} > f''_n$.
\vspace{.5ex}

\item Reduce each frequency $f''_i$ to the closest value $f_i$ of the form 
$2^{k}\left(1+\frac{j}{C}\right)$, where $k$, $C$ and $j$ are integers
such that $k\geq min$, $C = 2^q\ge 2$ for $q =\lfloor min/2\rfloor\ge 1$, and $0\le j \le C-1$.
These conditions imply that the new frequencies $f_i$ are integral.
Since the reduction of $f''_i$ to $f_i = 2^{k}\left(1+\frac{j}{C}\right)$ is by a factor less than 
$1+\frac{1}{C}$, the density of the
whole sequence of frequencies  
increases at most by a factor of $1+\frac{1}{C}$,
to at most ${\left(1+\frac{1}{C}\right)}/{(1+\delta)}$.

\vspace{.5ex}

The sequence $\anglebr{f_1, f_2, \ldots, f_n}$ is our (proper) instance $V(I,\delta)$ of the Pinwheel problem. 
The remaining steps compute a schedule of this sequence.
Steps 3-5 
use Observations~1 and~2, and further reductions of the frequencies if needed, to 
transform the sequence $\anglebr{f_1, f_2, \ldots, f_n}$ 
to a sequence $\anglebr{g_1, g_2, \ldots, g_r}$, $r \le n$ where all frequencies $g_i$ are  powers of $2$.

\vspace{.5ex}

\item 
We refer to the range $[2^k, 2^{k+1})$ of frequencies as {\em layer\/} $k$, and to the 
set of frequencies of the same value $2^{k}\left(1+\frac{j}{C}\right)$ as the {\em group} $j$ in layer $k$.

\vspace{.5ex}

For $k = max, max-1, \ldots, min+1$,
apply Observation~1 in layer $[2^{k},2^{k+1})$ 
as many times as possible to combine pairs of the same frequencies $2^{k}\left(1+\frac{j}{C}\right)$, $1 \le j \le C-1$,
pushing them down to the lower layer by replacing each such pair with one frequency $2^{k-1}\left(1+\frac{j}{C}\right)$ 
(see the top half of Figure~\ref{f:MainAlgo}).

\vspace{.5ex}

On the conclusion of this step, there is at most one frequency $2^{k}\left(1+\frac{j}{C}\right)$, for each combination of $k\in [min+1, max]$ 
and $j\in[1,C-1].$

\vspace{.5ex}

\item Apply Observation~2 in the layer $[2^{min},2^{min+1})$
until there are at most $C+j-1$ frequencies $2^{min}\left(1+\frac{j}{C}\right)$ left, for any $j\in[1,C-1]$
(see the first part of the bottom half of Figure~\ref{f:MainAlgo}).

%
%
\vspace{.5ex}

\item For $k = max, max-1, \ldots, min$, reduce all remaining frequencies in range $[2^{k},2^{k+1})$ which
are not powers of $2$, group by group starting from the top group defined by $j=C-1$, 
and pushing each frequency down to the next lower group.
While progressing down through the groups,
keep applying Observation~1 whenever possible, if in a layer $k \ge min+1$, and Observation~2 
in the lowest layer for $k = min$. 

\vspace{.5ex}

On the conclusion of this step, we have a sequence of frequencies 
$\anglebr{g_1, g_2, \ldots, g_r}$, $r \le n$,
which are powers of $2$, 
but the density further increases by some value $\Delta D\ge 0$.
Thus the density of this final powers-of-$2$ instance $\anglebr{g_i}$ of the Pinwheel problem 
is at most  ${\left(1+\frac{1}{C}\right)}/{(1+\delta)}+\Delta D$.
We will show that the setting of the parameters $\delta$ and $C$ imply that this bound is at most $1$,
so the Pinwheel sequence $\anglebr{g_i}$ is feasible.

\vspace{.5ex}

\item
Compute a cyclic schedule for the sequence of frequencies $\anglebr{g_1, g_2, \ldots, g_r}$ 
using the algorithm for powers-of-2 Pinwheel instances 
from~\cite{holte_pinwheel:_1989}.
Such a schedule is specified by pairs $(p_i , q_i)$, for $1 \le i \le r$, which mean that 
the frequency $g_i$ is placed in the perpetual schedule of $\anglebr{g_1, g_2, \ldots, g_r}$ 
at positions $p_i + kq_i$, for $k \ge 0$.

\vspace{.5ex}

Expand the schedule of $\anglebr{g_1, g_2, \ldots, g_r}$ to a schedule  of
$\anglebr{f_1, f_2, \ldots, f_n}$ 
by tracing back the applications of Observations~1 and~2. 
The schedule of $\anglebr{f_1, f_2, \ldots, f_n}$ is returned as the computed schedule of 
the BGT input instance $(h_1, h_2, \ldots, h_n)$.

%
\end{enumerate}

\begin{figure}
\begin{center}
\includegraphics[scale=0.35]{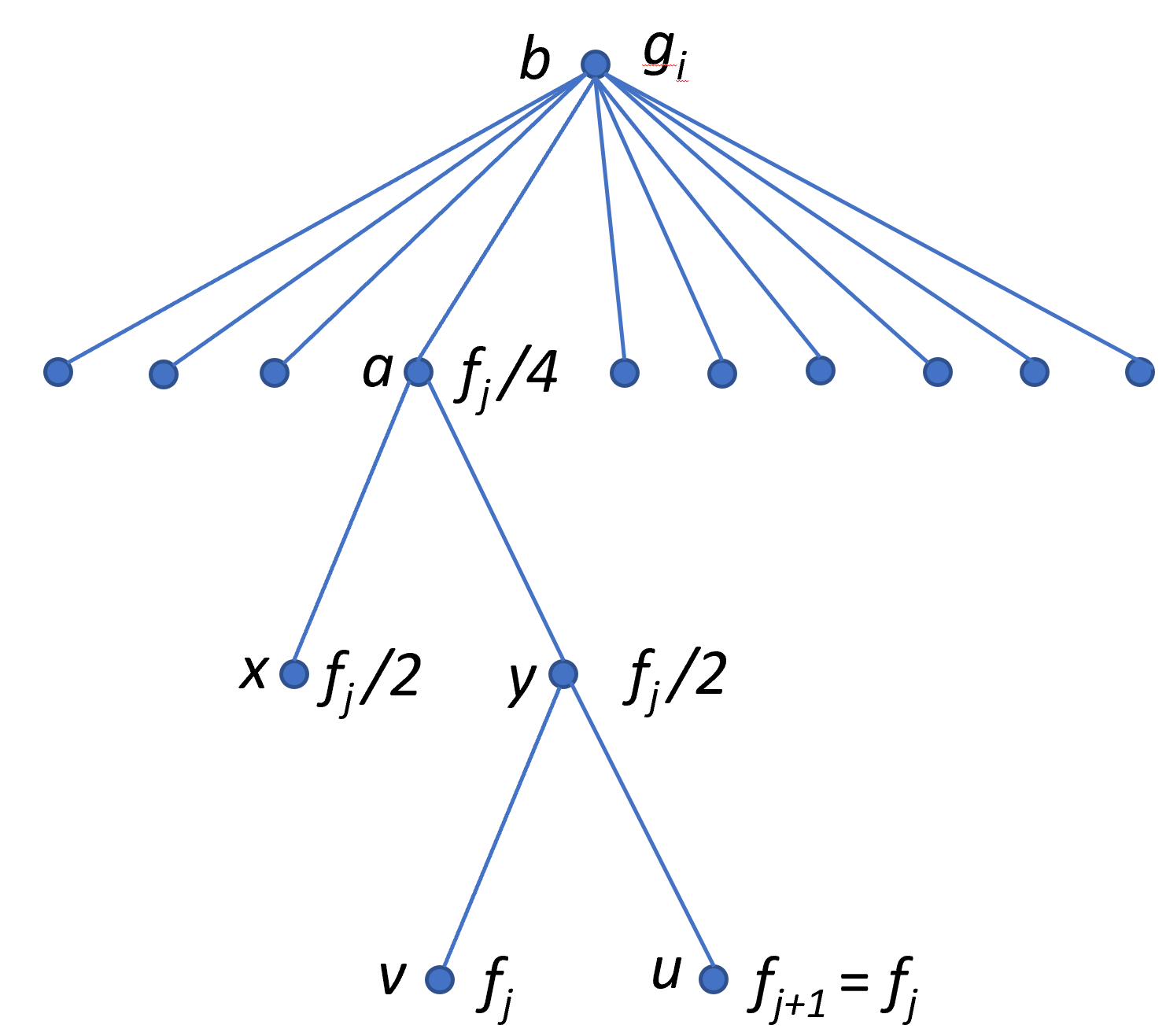}
\end{center}
\caption{Illustration of combining frequencies in the Main Algorithm.  In steps 3--5 of the algorithm,
frequency $f_j$ is paired twice with other frequencies by applications of Observation~1, creating a frequency $f_j/4$
(node $a$).
The resulting frequency is then put into a group of $10$ frequencies $f_j/4$ and this group 
is replaced with one new frequency $g_i = f_j/40$ by an application of Observation 2.
}
\label{f:CombineFreq}
\end{figure}

\vspace{.3ex}
\noindent
The final step 6 of the algorithm is illustrated by an example given in Figure~\ref{f:CombineFreq}.
In this example, a frequency $f_j$ (computed in step 2) was subsequently (in steps 3--5) paired twice with other frequencies 
by applications of Observation~1 (contributing first to a new frequency $f_j/2$ and then to a new frequency $f_j/4$)
and ended up in a group of $10$ equal frequencies $f_j/4$.
These $10$ frequencies were replaced with one new frequency $g_i = f_j/40$ by an application of Observation~2.
Let the pair $(p_i, q_i)$ represent the positions of $g_i$ in the computed cyclic schedule of  $\anglebr{g_1, g_2, \ldots, g_r}$:
frequency $g_i$ is placed at positions $p_i + kq_i$, for $k \ge 0$.

Expanding the schedule of $\anglebr{g_1, g_2, \ldots, g_r}$ to the schedule  of
$\anglebr{f_1, f_2, \ldots, f_n}$, 
every $40$-th occurrence of frequency $g_i$ 
is replaced by frequency $f_j$.
More precisely, we first replace every $10$-th occurrence of $g_i$ with the frequency $a=f_j/4$, starting 
from the $4$-th occurrence of $g_i$.
Then every second occurrence of $a$ is replaced with $y=f_j/2$, starting from the $2$-nd occurrence of $a$,
and finally every other occurrence of $y$ is replaced with $v=f_j$.
Thus frequency $f_j$ is placed in the schedule at positions $(p_i + 13 q_i) + k(40q_i)$.




\begin{theorem}\label{th:main}
For each BGT input instance, the Main Algorithm 
computes a $(1+3\sqrt{h_1/H})$-approximation schedule.
\end{theorem}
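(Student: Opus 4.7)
The plan is to invoke Lemma~\ref{l:delta} with $\delta = 3\sqrt{h_1/H}$: it suffices to verify that the Main Algorithm produces a feasible Pinwheel schedule for the instance $V(I,\delta) = \anglebr{f_1, f_2, \ldots, f_n}$ obtained at the end of Step~2. Observations~1 and~2 guarantee that a feasible schedule for the transformed sequence in Steps~3--5 can be expanded back to a feasible schedule of the source sequence, so the schedule returned in Step~6 is feasible for $V(I,\delta)$ provided the final powers-of-$2$ instance $\anglebr{g_1, \ldots, g_r}$ is feasible. By the result from~\cite{holte_pinwheel:_1989} cited in Section~\ref{s:pinwheel}, any powers-of-$2$ Pinwheel instance of density at most $1$ admits an efficiently computable cyclic schedule, so the whole proof reduces to verifying that the density of $\anglebr{g_1, \ldots, g_r}$ is at most $1$.

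First I would track the density step by step. Step~1 fixes the pseudo-density at $1/(1+\delta)$; Step~2 multiplies each frequency's contribution by a factor less than $1+1/C$, bringing the density to at most $(1+1/C)/(1+\delta)$. Steps~3 and~4 preserve density, being pure applications of Observations~1 and~2. The only further increase comes from Step~5, where individual frequencies are actually reduced from group $j$ to group $j-1$ before being combined. Denoting the aggregate increase by $\Delta D$, feasibility of $\anglebr{g_i}$ boils down to proving
\[
\frac{1+1/C}{1+\delta} + \Delta D \;\le\; 1, \qquad \text{equivalently,} \qquad \Delta D \;\le\; \frac{\delta - 1/C}{1+\delta}.
\]

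The hard part will be bounding $\Delta D$. Each reduction of a frequency $2^k(1+j/C)$ to $2^k(1+(j-1)/C)$ contributes at most $1/(C\cdot 2^k)$ to the density. For the layers $k \in [min+1, max]$, Step~3 leaves at most one frequency per group, so summing over $j \in [1,C-1]$ and geometrically over $k \ge min+1$ yields an aggregate contribution of $O(1/2^{min})$. For the layer $min$, up to $C+j-1$ frequencies may sit in group $j$ after Step~4, and an amortized argument tracking how the sweep through $j = C-1, C-2, \ldots, 1$ feeds frequencies into Observation~2 gives an aggregate contribution of $O(1/C)$. Since Step~1 forces $2^{min} \ge (1+\delta)H/(2h_1)$, the choice $q = \lfloor min/2 \rfloor$ yields $1/C = \Theta(\sqrt{h_1/H})$, so both $1/C$ (from Step~2) and $\Delta D$ (from Step~5) are $O(\sqrt{h_1/H})$.

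The constant $3$ in $\delta = 3\sqrt{h_1/H}$ is chosen precisely to dominate the sum of these two contributions, so the displayed inequality holds and the final Pinwheel instance has density at most $1$; Lemma~\ref{l:delta} then immediately delivers the claimed $(1+3\sqrt{h_1/H})$-approximation ratio. I expect the main technical obstacle to be the amortized accounting in the $min$ layer, where many frequencies accumulate before being collapsed into single powers of $2$ by Observation~2; by contrast, the higher layers only require a straightforward geometric summation, and the parameter choice $q = \lfloor min/2\rfloor$ is tuned precisely to balance the two sources of density inflation against each other.
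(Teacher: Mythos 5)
Your proposal follows the paper's proof strategy exactly: reduce the claim to showing the final powers-of-two Pinwheel instance has density at most $1$, track the density inflation from rounding (Step~2) and from the frequency reductions (Step~5), and invoke Lemma~\ref{l:delta}. The per-reduction bound of $1/(C\cdot 2^k)$, the telescoping over layers above $min$, and the $(C+j-1)$-multiplicity argument in the $min$ layer are all the right ingredients, and the identification $1/C = \Theta(\sqrt{h_1/H})$ via $q = \lfloor min/2\rfloor$ is correct.

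The gap is the final quantitative step. You write that ``the constant~$3$ is chosen precisely to dominate the sum of these two contributions, so the displayed inequality holds'' -- but that inequality is the conclusion you need to prove, not a premise. The paper's verification here is genuinely delicate and has no slack. With $K = 2^{min}/C^2 \in \{1,2\}$, the density bound reduces to $\frac{1}{1+\delta} + \frac{1}{C}\bigl(\frac{1}{1+\delta}+\frac{\ln 2}{K}\bigr)$; substituting $1/C < \sqrt{2K}\,\delta/(3\sqrt{1+\delta})$ yields a function $f_K(\delta)$ which must be shown $\le 1$ on $(0,3]$ -- and this requires the substitution $\gamma=\sqrt{1+\delta}$, factoring $f_K(\delta)-1$ into $\frac{\gamma^2-1}{3\gamma^3}$ times a quadratic in $\gamma$, and checking the roots of that quadratic lie outside $(1,2]$ for both $K=1$ and $K=2$. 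Moreover, your ``$O(1/C)$'' for $\Delta D_{min}$ suppresses the constant: the argument needs the sharper bound $\Delta D_{min} \le \frac{C}{2^{min}}\sum_{j=1}^{C-1}\frac{1}{C+j}$, which, combined with $\Delta D_{above}$, telescopes to $\frac{1}{KC}\sum_{j=C}^{2C-2}\frac{1}{j} < \frac{\ln 2}{KC}$. A cruder $\Theta(1/C)$ bound (say with constant $3/2$, as a naive summation would give) would not fit under $1$: the function $f_K(\delta)$ equals exactly $1$ at $\delta=0$, so there is no room to absorb a looser constant. The amortized bookkeeping you postpone to later is precisely where the theorem's constant is won or lost.
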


\begin{proof}
We show that 
the density of the sequence of frequencies $\anglebr{g_1, g_2, \ldots, g_r}$ computed 
in  step 5 of the algorithm is at most~$1$. Since these frequencies are powers of $2$, 
the sequence  $\anglebr{g_1, g_2, \ldots, g_r}$ is feasible
and gives a $(1+\delta)$-approximation schedule for the initial BGT input instance 
(Lemma~\ref{l:delta}).

The density of the Pinwheel pseudo-instance $\anglebr{f''_1, f''_2, \ldots, f''_n}$ is $D'' = 1/(1+\delta)$.
To bound the density $D$ of the Pinwheel instance $\anglebr{f_1, f_2, \ldots, f_n}$, observe that 
if $f_i = 2^{k}\left(1+{j}/{C}\right)$, then $f_i$ has been obtained from the rational frequency $f_i''$
such that 
\[ 2^{k}\left(1+\frac{j}{C}\right) \le f''_i < 2^{k}\left(1+\frac{j+1}{C}\right).
\]
This means that 
\[ f_i = 2^{k}\left(1+\frac{j}{C}\right) 
  > f''_i \left(1+\frac{j}{C}\right)/\left(1+\frac{j+1}{C}\right) = f''_i \frac{C+j}{C+j+1} \ge f''_i \frac{C}{C+1},
\]
so 
\[D < (1+1/C) D'' = (1+1/C)/(1+\delta).
\]

Steps 3 and 4 of the algorithm transform the Pinwheel instance $\anglebr{f_1, f_2, \ldots, f_n}$
using Observations 1 and 2, but without changing its density.

Step 5 further transforms the sequence of frequencies, increasing the density 
of the sequence whenever individual frequencies are reduced.
We bound separately the increase $\Delta D_{above}$
of the density when we modify the frequencies in layers 
$max, max -1, \ldots , min +1$,
and the increase $\Delta D_{min}$ of the density when we modify the frequencies in layer $min$.

\begin{eqnarray*}
\Delta D_{above} 
 & \le & \sum_{k=min+1}^{max} 
     \sum_{j=1}^{C-1} \frac{1}{2^k}\left( \frac{1}{1+(j-1)/C} - \frac{1}{1+j/C} \right) \\
& = & \sum_{k=min+1}^{max} 
     \frac{1}{2^k}\left( 1 - \frac{1}{1+(C-1)/C} \right)
     \le \frac{1}{2^{min}} \cdot \frac{C-1}{2C-1}\,.
\end{eqnarray*}

\begin{eqnarray*}
\Delta D_{min} 
 & \le & \sum_{j=1}^{C-1} (C+j-1) \frac{1}{2^{min}} 
              \left( \frac{1}{1+(j-1)/C} - \frac{1}{1+j/C} \right) \\
& = & \frac{C}{2^{min}} \cdot \sum_{j=1}^{C-1}  \frac{1}{C+j}\,.
\end{eqnarray*}

Let $K = 2^{min} / C^2 \in\{1,2\}$.
Then the density of the final powers-of-two Pinwheel instance $\anglebr{g_1, g_2, \ldots, g_r}$ is at most

\begin{eqnarray} 
\lefteqn{\hspace{-4mm}D + \Delta D_{above} + \Delta D_{min}} \nonumber \\[1mm] 
  & \le & \frac{1+1/C}{1+\delta} + \frac{1}{2^{min}} \cdot \frac{C-1}{2C-1}
     + \frac{C}{2^{min}}  \cdot \sum_{j=1}^{C-1}  \frac{1}{C+j} \nonumber \\
   & = & \frac{1+1/C}{1+\delta} + \frac{1}{KC} \cdot \left(\frac{C-1}{C\cdot (2C-1)}
     + \sum_{j=1}^{C-1}  \frac{1}{C+j}\right) \label{kldf89a} \\[1mm]
   & = & \frac{1+1/C}{1+\delta} + \frac{1}{KC} \cdot \left(\frac{1}{C} - \frac{1}{2C-1}
     + \sum_{j=1}^{C-1}  \frac{1}{C+j}\right) \nonumber \\[1mm]
   & = & \frac{1+1/C}{1+\delta} + \frac{1}{KC} \cdot \sum_{j=C}^{2C-2}  \frac{1}{j} \nonumber \\
   & \le & \frac{1+1/C}{1+\delta} + \frac{1}{KC} \cdot \ln 2 \label{kldf89b} \\[1mm]
   & = & \frac{1}{1+\delta} + \frac{1}{C} \cdot \left(\frac{1}{1+\delta}+\frac{\ln 2}{K}\right) \nonumber \\[1mm]
   & \le & \frac{1}{1+\delta} + \frac{\sqrt{2K}}{3} \cdot \frac{\delta}{\sqrt{1+\delta}} \cdot \left(\frac{1}{1+\delta}+\frac{\ln 2}{K}\right) \label{chjk1a} \\[1mm]
   & \le & 1. \label{jkk34s}
\end{eqnarray}
To get Equality~\eqref{kldf89a}, use $2^{min} = K C^2$.
To get Inequality~\eqref{kldf89b}, use the known fact that for the harmonic numbers $H_k = 1 + 1/2 + 1/3 + 1/4 + ... + 1/k$, we have $H_k = \ln k  + \delta_k$, where $(\delta_k)$ is a sequence of positive numbers strictly monotonically decreasing. Hence, we have $H_{2k} - H_k = \ln 2 + \delta_{2k} - \delta_k < \ln 2$.
To get Inequality~\eqref{chjk1a}, use 
\[
C \cdot \sqrt{2K}  = \sqrt{2^{min+1}}                  
> \sqrt{f''_{1}} =  \sqrt{(1+\delta)\frac{H}{h_1}} 
= \frac{3\sqrt{1+\delta}}{\delta}\,. 
\]
Inequality~\eqref{jkk34s} follows from the fact that for $0 \le \delta \le 3$ and $K \in \{1,2\}$,
the function $f_K(\delta)$ in line~\eqref{chjk1a} is maximized for $\delta=0$.
To see this, substitute $\gamma \equiv\sqrt{\delta + 1}$ (so for $\delta$ increasing from $0$ to $3$, 
$\gamma$ increases from $1$ to $2$) to obtain
\begin{eqnarray}
 f_K(\delta) - 1 =  f_K(\gamma^2 -1) - 1
 & = & \frac{1}{\gamma^2} + \frac{\sqrt{2K}}{3} \cdot \frac{(\gamma^2 -1)}{\gamma} 
           \cdot \left(\frac{1}{\gamma^2}+\frac{\ln 2}{K}\right)-1  \nonumber\\
& = & \frac{\gamma^2 -1}{3\gamma^3} 
         \left( \frac{\sqrt{2}\ln 2}{\sqrt{K}} \gamma^2 -3\gamma +  {\sqrt{2K}} \right).
\label{opsds3a}
\end{eqnarray}
The quadratic in the parentheses in~\eqref{opsds3a}
has two distinct real roots
\[ \frac{3\pm \sqrt{9-8\ln 2}}{2\sqrt{2}\ln 2/\sqrt{K}}.
\]
For both $K\in \{1,2\}$, the smaller root $r_{K,1}$ is less than $1$
($r_{1,1} < r_{2,1} = 0.823...$)
and the larger root $r_{K,2}$ is greater than $2$
($r_{2,2} > r_{1,2} = 2.478...$).
This means that~\eqref{opsds3a} is equal to $0$ for $\gamma=1$ and negative for $\gamma\in (1,2]$,
or equivalently,  $f_K(\delta) - 1$ is equal to $0$ for $\delta = 0$ and negative for $\delta\in(0,3]$.
\ep\end{proof}

The discussion in Section~\ref{sec:BGT-Pinwheel} and Theorem~\ref{th:main} imply the following corollary.

\begin{corollary}
For Pinwheel instances $2 \le f_{1} \le f_{2}\le \ldots \le f_{n}$, 
the Main Algorithm applied to sequences $(1/f_{1}, 1/f_{2}, \ldots,  1/f_{n})$ is a 
Pinwheel scheduler with the guarantee density threshold $1 - 3/\sqrt{f_1}$. 
Thus each Pinwheel instance with density at most $1 - 3/\sqrt{f_1}$ is feasible. 
\end{corollary}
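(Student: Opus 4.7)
The plan is to derive the corollary as a direct specialization of Theorem~\ref{th:main} using the Pinwheel-to-BGT reduction sketched at the end of Section~\ref{sec:BGT-Pinwheel}, now instantiated with the explicit constant $c=3$. First I would take an arbitrary Pinwheel instance $V = \anglebr{f_1,f_2,\ldots,f_n}$ with $2 \le f_1 \le f_2 \le \cdots \le f_n$ and density $D(V) \le 1 - 3/\sqrt{f_1}$, and form the BGT instance $I = (h_1,h_2,\ldots,h_n)$ by setting $h_i = 1/f_i$. Then $h_1 \ge h_2 \ge \cdots \ge h_n$ and $H = H(I) = \sum_i 1/f_i = D(V) \le 1$.

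Next I would apply Theorem~\ref{th:main} to $I$. It delivers a schedule $\cS$ whose maximum bamboo height satisfies
\[
\MH(\cS) \;\le\; H\bigl(1 + 3\sqrt{h_1/H}\bigr) \;=\; H + 3\sqrt{H\,h_1} \;=\; H + 3\sqrt{H/f_1}.
\]
Using the hypothesis $H \le 1 - 3/\sqrt{f_1}$ together with $H \le 1$, I would conclude
\[
\MH(\cS) \;\le\; \bigl(1 - 3/\sqrt{f_1}\bigr) + 3\sqrt{1/f_1} \;=\; 1.
\]

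Finally, I would translate $\cS$ back into the Pinwheel setting. Since each bamboo $b_i$ has growth rate $h_i = 1/f_i$ and never exceeds height $1$ under $\cS$, the gap (in rounds) between any two consecutive cuts of $b_i$ is at most $1/h_i = f_i$. Interpreting $\cS$ as an infinite sequence of indices from $\{1,\ldots,n\}$, this is precisely the Pinwheel feasibility condition: every window of $f_i$ consecutive positions contains index $i$. Hence $V$ is feasible and $\cS$ is a valid Pinwheel schedule, proving both claims of the corollary.

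There is no real obstacle here; the argument is essentially a bookkeeping exercise verifying the single inequality $H + 3\sqrt{H/f_1} \le 1$ under the density hypothesis, which follows immediately from $H \le 1 - 3/\sqrt{f_1}$ and $\sqrt{H} \le 1$. The only point worth flagging is the need for the bound $H\le 1$ in the last step, which is automatic since $1 - 3/\sqrt{f_1} < 1$; this is what allows the estimate $\sqrt{H/f_1} \le 1/\sqrt{f_1}$ to absorb exactly the slack $3/\sqrt{f_1}$ left by the density assumption.
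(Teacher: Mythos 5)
Your proposal is correct and takes essentially the same route as the paper: it instantiates the general argument in Section~\ref{sec:BGT-Pinwheel} (where the paper writes the bound $D(V)(1+c\sqrt{(1/f_1)/D(V)}) = D(V) + c\sqrt{D(V)/f_1} \le 1$ for a generic constant $c$) with the concrete constant $c=3$ supplied by Theorem~\ref{th:main}. The one point worth being explicit about, which you correctly rely on, is that what Theorem~\ref{th:main} (via Lemma~\ref{l:delta}) actually delivers is the bound $\MH(\cS) \le H(1 + 3\sqrt{h_1/H})$ relative to $H$ itself, not merely relative to $\OPT(I)$; the weaker reading $\MH(\cS) \le (1+\delta)\OPT(I)$ would not suffice here since $\OPT(I)$ can be close to $2H$.
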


\begin{theorem}\label{th:main2}
The Main Algorithm can be implemented so that its running time is $O(n\log n)$, 
assuming constant-time operations on real numbers, including the logarithm operation and 
rounding. 
\end{theorem}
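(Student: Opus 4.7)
The plan is to describe an implementation of the six steps of the Main Algorithm whose total running time, including the expansion of the computed powers-of-$2$ schedule back to a schedule of the input, is $O(n\log n)$. First, Steps~1 and~2 can be done in $O(n)$ time: given constant-time logarithm, arithmetic and rounding on reals, for each $h_i$ I compute $f''_i=(1+\delta)H/h_i$, its layer $k_i=\lfloor\log_2 f''_i\rfloor$ and group $j_i=\lfloor C(f''_i/2^{k_i}-1)\rfloor$, and thereby $f_i=2^{k_i}(1+j_i/C)$; the parameters $\delta$, $min$, $max$, $q$, $C$ take $O(1)$ each. I would then sort the $n$ frequencies by the lexicographic order of $(k_i,j_i)$ in $O(n\log n)$ time, and store the non-empty $(k,j)$-buckets in, say, a doubly linked list in sorted order, each holding the list of original indices currently assigned to it.

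Next, for Steps~3--5 I would maintain in parallel (i) the collection of non-empty $(k,j)$-buckets and (ii) a \emph{merge forest}, in which each application of Observation~1 or~2 creates a new internal node whose children are the frequencies being combined. Observation~1 on a pair in bucket $(k,j)$ costs $O(1)$: remove two entries, create a parent node, and insert it into bucket $(k-1,j)$. Observation~2 on $m=C+j$ entries costs $O(m)$ but reduces the total frequency count by $m-1\ge C$, so its amortised cost per unit of count reduction is $O(1)$. Since the initial frequency count is $n$ and each merge decreases it by at least one, the total number of merges throughout Steps~3--5 is at most $n-1$ and the aggregate cost of merges is $O(n)$. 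Step~3 visits the layers only via the non-empty $(k,j)$-buckets (no cost is paid for empty layers), and each merge contributes $O(1)$ bucket insertions.

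The hard part will be Step~5, which on top of merges performs \emph{group reductions}: relabelling a bucket from $(k,j)$ to $(k,j-1)$ and possibly combining its contents with whatever already sits in $(k,j-1)$. Implemented entry by entry, this could cost $\Theta(nC)$. The key is that an entire bucket can be relabelled in $O(1)$ by updating just its label, and any subsequent combining with $(k,j-1)$ is absorbed into one Observation~1 or~2 application whose amortised cost is already accounted for above. A bucket is ever created either as an initial bucket (at most $n$ of these) or as the destination of a merge (at most $n-1$), and is visited $O(1)$ times by Step~5, so Step~5 also runs in $O(n)$ time. Finally, Step~6 applies the powers-of-$2$ Pinwheel scheduler of~\cite{holte_pinwheel:_1989} to $\anglebr{g_1,\dots,g_r}$ with $r\le n$, which admits an $O(r\log r)$ implementation, and the resulting schedule is expanded to a schedule of the input by traversing the merge forest top-down and propagating each internal node's $(p,q)$ to its children via the formulas implicit in Observations~1 and~2 (for Observation~1 the two children get $(p,2q)$ and $(p+q,2q)$; for Observation~2 with $m$ children, child $\ell$ gets $(p+(\ell-1)q, mq)$). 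This propagation costs $O(1)$ per node of the merge forest, totalling $O(n)$, and summing over all steps gives the stated $O(n\log n)$ bound.
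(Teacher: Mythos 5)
Your proof takes essentially the same approach as the paper's: sort the frequencies into buckets in $O(n\log n)$ time, maintain a forest (the paper's ``ordered trees'') recording every application of Observations~1 and~2, bound the total merge work by amortization, and traverse the forest to expand the powers-of-two schedule back to the original frequencies (and your per-node expansion formulas for Observations~1 and~2 are correct).

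There is one unjustified claim, though it does not affect the final bound. You assert that an Observation~1 application costs $O(1)$ because one can ``insert [the parent node] into bucket $(k-1,j)$,'' but with only a sorted doubly linked list over the \emph{non-empty} buckets, locating (or creating) the bucket $(k-1,j)$ is not a constant-time operation --- the destination can be an arbitrary distance away in the list, and a two-dimensional array of all $(k,j)$ cells cannot be afforded because the number of layers $max-min+1$ is not bounded by a polynomial in $n$. The paper sidesteps this by keeping the non-empty buckets in a dictionary (balanced search tree), accepting $O(\log n)$ per lookup; since there are $O(n)$ merges in total, steps~3--5 then cost $O(n\log n)$ rather than the $O(n)$ you claimed. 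Either way the running time remains $O(n\log n)$ because your sorting step already pays that, so the theorem is correctly established; you just should not claim the stronger $O(n)$ for steps~3--5 without describing a concrete constant-time bucket-location mechanism.
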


\begin{proof}
Steps 1 and 2 take linear time (or $O(n\log n)$, if the input sequence is not provided in the sorted order).
In steps 3--5,
the non-empty groups  are maintained in a dictionary data structure, which can be created in $O(n\log n)$ time.

In steps 3 and 4, the groups are considered in the decreasing order (from the highest group in the highest layer).
In step 3, if two frequencies are paired by Observation 1, then the group to which the new frequency should be added 
can be found (or created, if empty) in $O(\log n)$ time. 
This can happen only $O(n)$ times, since each pairing reduces the number of frequencies, so step 3 
takes $O(n\log n)$ time.
In step 4, if $m$ frequencies are combined by Observation 2 into one new frequency $2^{min}/C$, 
then the required computation takes $O(m)$ time (in this case, 
no need to look for the group with frequencies $2^{min}/C$).
Thus step 4 takes $O(n)$ time.

In step 5, the groups are again considered in the decreasing order. 
For each group, frequencies are paired by Observation 1 or combined by Observation 2 (as above, all this computation 
takes $O(n\log n)$ time) and the remaining frequencies are appended to the next non-empty group in constant time.
Thus step 5 takes $O(n\log n)$ time.

Step 6, the computation of a cyclic schedule of the feasible powers-of-2 frequencies $\anglebr{g_1, g_2, \ldots, g_r}$,
is done by the following algorithm from~\cite{holte_pinwheel:_1989}, which can be implemented to run in linear time,
assuming that the input sequence is sorted. 
Take two largest frequency values $g_{r-1}$ and $g_{r}$,
replace them with a new frequency $g' = 2g_{r-1}$ and compute a cyclic schedule for the new $(r-1)$-element 
sequence. 
To get a schedule for the original sequence $\anglebr{g_1, g_2, \ldots, g_r}$, replace the occurrences of $g'$ with 
alternating occurrences of $g_{r-1}$ and $g_{r}$.
That is, if  $g'$ occurs at positions $p + k q$, for $k\ge 0$, then $g_{r-1}$ occurs at positions $p + 2k q$ and
 $g_{r}$ occurs at positions $p + (2k+1)q$, for $k\ge 0$.
 
During the computation in steps 3--5, the pairing and combining of frequencies is recorded in ordered trees 
as illustrated in Figure~\ref{f:CombineFreq}. The additional time required to create one such tree is linear in the 
size of the tree, so $O(n)$ time for creating all trees.
By traversing these trees, we can  
expand in $O(n)$ time the cyclic schedule of $\anglebr{g_1, g_2, \ldots, g_r}$ to a cyclic schedule of $\anglebr{f_1, f_2, \ldots, f_n}$.

The Main Algorithm computes the representation of a cyclic schedule in the form of pairs $(p_i, q_i)$, which enables
constant-time answers to queries: given the current step $T$, when will be the next step when a given element $b_i$ 
is serviced?
To generate the schedule step by step, taking $O(\log n)$ time per step, 
maintain a priority queue (e.g. as the heap data structure) 
with the pairs $(i, t_i)$, where $i = 1,2,\ldots, n$ and $t_i$ is the next step when element $b_i$ will be serviced.
 Initially the priority queue contains pairs $(i, p_i)$. 
 For the current step of the schedule, remove from the priority queue the pair $(i,t_i)$ with the smallest $t_i$, 
 schedule the element $b_i$, and insert to the priority queue the pair $(i, t_i+q_i)$.\footnote{%
This works, if the cyclic schedule does not have gaps, and
the cyclic schedule computed in the Main Algorithm does not have gaps.
If the cyclic schedule may have gaps, then we should keep track of the current step $T$ and schedule $b_i$ only 
if $t_i = T$.}\qed
\end{proof}

\omitthis{
\section{Improved approximation for the general case}
\label{sec:impoved-approx}

In this section, we consider general approximation (upper) bounds of polynomial 
BGT algorithms, that is, bounds on the approximation ratios which hold for any BGT input, irrispectively of its characteristics.
Comparing with the best previous general bound
of $12/7$~\cite{Ee21}, 
we show a BGT algorithm with approximation ratio
not greater than $60/37+o(1) < 12/7$.
We assume throughout this section that the growth rates
are normalized to $H=1$. 

Our algorithm partitions the growth rates into two groups of small rates of values $o(1)$ and
large rates, computes a schedule for the 
small rates using the approximation algorithm 
from 
Section~\ref{SubSection-approx-algo-small-rates}
and an optimal schedule for large rates
using exhaustive search, and finally merges
these two schedules into one schedule for all rates.
We summarise this result in the following theorem 
and the remaining part of this section is the proof of this theorem.




\begin{theorem}
There is a polynomial-time algorithm which 
for any instance of the BGT problem,
computes a $(60/37+o(1))$-approximation schedule.
\end{theorem}


\noindent
\paragraph{The split of the input sequence.} We partition the growth rates into two groups 
of large growth rates
$L=\{b_i:h_i\ge 1/\log^{(4)}n\}$
and small growth rates 
$S=\{b_i:h_i< 1/\log^{(4)}n\}$, 
denoting $l = |L|$ and $s=|S|$.
We also define 
$\bar{L} = \sum_{h\in L} =h_1+h_1+\dots +h_l$
and $\bar{S}=h_{l+1}+h_{l+2}+\dots +h_n$, so
$\bar{L}+\bar{S}=1$.
Generally, if $A$ is a subset of growth rates,
then 
$\bar{A}$ denotes the sum of the rates in $A$.

Let ${\cal L}$ be an optimal schedule 
(a sequence of bamboo cuts) for the growth rates in $L,$ and, for the case when 
$\bar{S} = \Omega(1)$,
let ${\cal S}$ be a $(1+o(1))$-approximation schedule for
the growth rates in~$S$ computed by
the approximation algorithm 
from 
Section~\ref{SubSection-approx-algo-small-rates}.

\begin{fact}\label{opt}
An optimal schedule $\cal L$ 
for the growth rates $L$ can be computed 
in $o(n)$ time. 
\end{fact}
\begin{proof}
The lower bound on the growth rates in $L$
imply that $l<\log^{(4)}n=\log\log\log\log n$.
As $l$ is very small in terms of $n$,
an appropriate super exponential exhaustive search finds an optimal schedule for $L$
in $o(n)$ time.
First observe that there is an optimal schedule which is periodic.
This comes from the fact that $\OPT(L) \le 2$, 
by simple $2$-approximation algorithm which reduces BGT to Pinwheel, 
implying that in the optimal schedule for $L$ every bamboo can have at most $2\log^{(4)}n$ different heights. 
Thus the number of all possible combinations of bamboo heights is bounded by $(2\log^{(4)}n)^{\log^{(4)}n}<\log\log n$.
This is an upper bound on the length of the period
of any (deterministic) schedule: after coming back to the same configuration, the schedule 
will keep repeating the sequence of previous cutting decisions.
This type of argument has been frequently used to bound the length of the period of processes of similar nature.

The cost of testing all possible periods of length $i<\log\log n$ is bounded by 
$\sum_{i=1}^{2 \log\log n}i(\log^{(4)}n)^i< (\log\log n)^2\cdot (2\log^{(4)}n)^{\log\log n}< o(n).$
\qed
\end{proof}

\begin{fact}\label{apr}
For the set $S$, if $\bar{S}=\Omega(1),$ 
then the algorithm of
Section~\ref{SubSection-approx-algo-small-rates}
computes a $(1+o(1))$-approximation schedule
$\cal S$ for $S$
in $O(n\log n)$ time.
\end{fact}
\begin{proof}
A direct consequence of 
Theorems~\ref{th:main}
and~\ref{th:main2}.\qed
\end{proof}

\paragraph{Lower bounds.}
Note that the maximum bamboo height observed in the optimal schedule $\cal L$ provides also a lower bound on the maximum height in the optimal solution for the whole original input $S\cup L$. 
Other lower bounds include $2h_1,$ as one cannot 
cut the fastest growing bamboo $b_1$ all the time, and $H=1$.

If $L = \emptyset$, then we have an $(1+o(1))$
approximation for the whole instance from Fact~\ref{apr}. Therefore, from now on we assume
that $L\not= \emptyset$.

In the $(60/37 + o(1))$-approximation algorithm
utilising Fact~\ref{opt} we first compute ${\cal L}.$ Then, if $\bar{S}=\Omega(1),$ utilising Fact~\ref{apr} $\cal S$ refers to (1+o(1))-approximation. Otherwise $\cal S$ provides $2$-approximation by reduction to Pinwheel. 
The cuts from the two schedules (for $L$ and $S$) are interleaved in a periodic fashion to form the solution $\cal L\otimes S$ for all bamboos in $S\cup L$.
We consider $5$ complementary cases based on the volume of $\bar{S}$:

\begin{description}
\item[Case 1] $\bar{S}<15/37,$
\item[Case 2] $15/37\le \bar{S}<19/37,$
\item[Case 3] $19/37\le \bar{S}<3/5,$
\item[Case 4] $3/5\le \bar{S}<3/4,$
\item[Case 5] $3/4\le \bar{S}.$
\end{description}

{\bf Argument for Case 1:} 
Assume first that $L=\{B_1\}.$ In this sub-case in the periodic combination $\cal L\otimes S$ we use a pattern $(\{B_1\},S,)$ meaning that every other time we cut the fastest growing bamboo $B_1$, and these cuts are interleaved with the consecutive cuts from $\cal S.$
The maximum height observed on $\{B_1\}$ is $2h_1$ which matches one of the lower bounds.
If $\bar{S}=\Omega(1)$ the maximum height observed on bamboos in $S$ is $2\cdot (15/37+o(1))=30/37+o(1)<1$.
And if $\bar{S}=o(1),$ the maximum height observed on bamboos in $S$ is $2 \bar{S} = o(1).$
Thus in this sub-case (when $|L|=1$) the overall solution
is optimal, with the maximum height equal to the optimal $2h_1$.

Assume now that $|L|\ge 2$.
The frequency of a bamboo in a given schedule is the maximum gap between two
consecutive cuts of this bamboo.
The minimum frequency of all bamboos in the optimal schedule $\cal L$ is at least $2.$
This time we use a periodic pattern $(L,L,L,S,)$ meaning that three consecutive cuts from $\cal L$ are interleaved with single cuts from $\cal S.$
Consider any bamboo from $L$ with frequency of cuts $f\ge 2$ in $\cal L.$
Note that for any two consecutive cuts of this bamboo 
in $\cal L$, we are adding between these two cuts at most $\lceil f/3 \rceil$
bamboo cuts from sequence $S$. 
This means that the frequency of the element from $L$
can increase to at most $f+\lceil f/3 \rceil$.
If $f=3k,$ for an integer $k\ge 1,$ we get the approximation factor $\frac{4k}{3k}=\frac43.$
If $f=3k+1$ we get approximation $\frac{4k+2}{3k+1}\le \frac 32$, and equal to $3/2$ for $k=1.$
And if $f=3k-1$ we get approximation $\frac{4k-1}{3k-1}\le \frac 32$, and equal to $3/2$
for $k=1.$
If $\bar{S}=\Omega(1)$ the maximum height observed on bamboos from $S$ is bounded by $4\cdot (15/37+o(1))=60/37+o(1) > 3/2.$
And if $\bar{S}=o(1),$ the maximum height observed on bamboos in $S$ is $o(1).$
Thus in this case the approximation ratio is bounded by $60/37+o(1).$

\smallskip
{\bf Argument for Case 2:}
Assume first that $L=\{B_1\}.$ In this sub-case in the periodic combination $\cal L\otimes S$ we use a pattern $(\{B_1\},S,)$ meaning that every other time we cut the fastest growing bamboo $B_1$, and these cuts are interleaved with the consecutive cuts from $\cal S.$
The maximum height observed on $\{B_1\}$ is $2h_1$ which matches one of the lower bounds.
The maximum height observed on bamboos in $S$ is $2\cdot (19/37+o(1))=38/37+o(1).$
Thus in this sub-case the approximation ratio is $38/37+o(1).$

Assume now that $|L|\ge 2$.
This time we use a periodic pattern $(L,L,S,)$ meaning that consecutive pairs of cuts from $\cal L$ are interleaved with single cuts from $\cal S.$
Consider any bamboo $B$ from $L$ with frequency of cuts $f\ge 2$ in $\cal L.$
Note that for any two consecutive cuts of this bamboo 
in $\cal L$, we are now adding between these two cuts at most $\lceil f/2 \rceil$
bamboo cuts from sequence $S$. 
This means that the frequency of the element from $L$
can increase to at most $f+\lceil f/2 \rceil$.
%
%
If $f=2k,$ for an integer $k\ge 1,$ we get the approximation factor $\frac{3k}{2k}=\frac32.$
If $f=2k+1,$ for an integer $k\ge 2,$ we get approximation $\frac{3k+2}{2k+1}\le \frac 85,$ which is maximized for $k=2.$
It remains to consider $f=3$, when the frequency $f$ increases to $5$. Hence, the bamboo $B$ never exceeds the height 
$5\cdot h_1$
in the sequence $(L,L,S,).$
The maximum height observed on bamboos from $S$ is bounded by $3\cdot (19/37+o(1))=57/37+o(1)<\frac{60}{37}.$
Thus in this case the approximation ratio is bounded by $\max(5 h_1, 60/37).$

We consider two sub-cases.
If $h_1\le 12/37,$ we obtain $60/37$ approximation by using the construction from the previous paragraph.
In the complementary sub-case we have $12/37<h_1\le 22/37.$ 
In the solution we use a periodic sequence $(\{B_1\},L-\{B_1\},S).$ 
In this sequence $B_1$ is cut with frequency $3,$ which gives the approximation ratio $\frac 32.$ The volume $\overline{L-\{B_1\}}<\frac{10}{37}.$ Thus if we apply $2$-approximation algorithm (Pinwheel) on $L-\{B_1\}$ and $(1+o(1))$-approximation on $S,$ the maximum observed height is bounded by $\frac{20}{37}.$ As the frequency of cuts in each of these sequences is increased by a multiplicative factor of $3,$ the maximum observed height on bamboos from these sets is bounded by $\frac{60}{37}.$  
Thus in this case the approximation ratio is bounded by $60/37.$

\smallskip
{\bf Argument for Case 3:}
Also here we consider two sub-cases.
If $h_1\le 32/111,$ we use Lemma~\ref{Approx-4-3} stating that one can get $\frac 43+h_1\le \frac{60}{37}$ approximation.
In the complementary sub-case we have $32/111<h_1\le 18/37.$ As also $19/37 \le \bar S < 3/5,$ we partition $S$ into $S_1$, $S_2$ and $S_3$ of almost equal volumes, i.e., $\bar{S_1},\bar{S_2},\bar{S_3}<\frac{1}{5} + o(1).$ Such partition is feasible as each growth rate in $S$ are $o(1).$ 
In the solution we use a periodic sequence $$(L-\{B_1\},\{B_1\},S_1,\{B_1\},L-\{B_1\},S_2,\{B_1\},S_3,).$$ In this sequence $B_1$ is cut with frequency $\le 3,$ which gives the approximation ratio $\frac 32.$ The volume $\overline{L-\{B_1\}} 
= \overline{L} - h_1
< \frac{18}{37} - \frac{32}{111} = \frac{22}{111}.$ Thus if we apply $2$-approximation algorithm (Pinwheel) on $L-\{B_1\},$ the maximum observed height is bounded by $\frac{44}{111}.$ As the frequency of cuts in this sequence is increased by a multiplicative factor of $4,$ the maximum observed height on bamboos from this set is bounded by 
$\frac{176}{111} < \frac{60}{37}.$  
If we apply $(1+o(1))$-approximation on $S_1$, $S_2$ and $S_3,$ the maximum observed height is bounded by $\frac{1}{5}+o(1).$ As the frequency of cuts in each of these sequences is increased by a multiplicative factor of $8,$ the maximum observed height on bamboos from these sets is bounded by 
$\frac{8}{5}+o(1) < \frac{60}{37}.$ 
Thus in this case the approximation ratio is bounded by $\frac{60}{37}.$

\smallskip
{\bf Argument for Case 4:}
In this case we use a periodic sequence $(L,S).$
As $\bar{L}\le 2/5$ we can use $2$-approximation algorithm which keeps all bamboos in $L$ below $4/5$ and in the periodic sequence under $8/5.$
In the case of $S$ we use $(1+o(1))$-approximation keeping all bamboos below $\frac 34+o(1),$ and in the periodic sequence below $\frac 32 +o(1).$
Thus in this case the approximation ratio is~$\frac 85.$

\smallskip
{\bf Argument for Case 5:}
Similarly to Case 3, we partition $S$ into $S_1,S_2$ of almost even volume this time bounded by $\frac 12 +o(1).$
In this case we use a periodic sequence $(L,S_1,S_2,).$
We use $2$-approximation for $L$ and $(1+o(1))$-approximation for $S_1$ and $S_2$ keeping
the bamboos in $L$ below $\frac{1}{4} \times 2 \times 3 = \frac 32$, and all bamboos in
$S$ below $\left(\frac{1}{2} +o(1)\right)\times 3 = \frac 32 +o(1).$
Thus in this case the approximation ratio is $\frac 32 + o(1).$

\smallskip
The running time of this algorithm is dominated 
by the $O(n\log n)$ time of the approximation algorithm from Section~\ref{SubSection-approx-algo-small-rates}.
}


\section{Improved approximation for the general case}
\label{sec:impoved-approx}

In this section, we consider general approximation (upper) bounds of polynomial 
BGT algorithms, that is, bounds on the approximation ratios which hold for any BGT input, irrespectively of its characteristics.
Comparing with the best previous general bound
of $12/7$ given in~\cite{Ee21}, 
we show a BGT algorithm with approximation ratio
not greater than $8/5+o(1) < 12/7$.
We assume throughout this section that the growth rates
are normalized to $H=1$. 

Our algorithm uses the approximation algorithm 
from Section~\ref{SubSection-approx-algo-small-rates}, which computes $(1+o(1))$-approximation 
schedules, if the highest growth rate $h_1$ 
is~$o(1)$.
The algorithm first partitions the input set $I$ into two groups $S$ and $L$, separating bamboos with small growth rates of values $o(1)$ 
from bamboos with large growth rates. 
Then schedules ${\cal S}$ for $S$ and ${\cal L}$ 
for $L$ are computed separately, and finally these 
two schedules are merged into the final schedule 
${\cal I}$ for the whole input $I$.
The schedule ${\cal S}$ for the small rates is computed 
using the approximation algorithm 
from Section~\ref{SubSection-approx-algo-small-rates}, so $\MH({\cal S}) \le H(S) + o(1)$.
The schedule ${\cal L}$ for the large rates is either an optimal schedule, computed
using exhaustive search, or an approximation schedule computed by 
the 2-approximation algorithm which we mentioned in Section~\ref{s:pinwheel}.
The choice 
depends on the parameters of $L$ and $S$.
In the former case, we use the bound 
$\MH({\cal L}) = \OPT(L) \le \OPT(I)$
in the analysis of the approximation ratio, while in the latter case, 
we use the bound $\MH({\cal L}) \le 2\cdot H(L)$.

We summarise this result in the theorem stated below
and the remaining part of this section is the proof of this theorem.
In our case analysis, for some cases we need to separate the fastest growing bamboo $b_1$ from the set $L$, getting a partition of $I$ into three sets
$B=\{b_1\}$, $L$ and $S$, and
computing schedules ${\cal S}$ and ${\cal L}$ for $S$ and $L$ as above. 
The final schedule ${\cal I}$ for the whole input $I$ is defined by a pattern in which the schedules 
${\cal S}$ and ${\cal L}$, and cutting $b_1$ (if $b_1$ is separated from $L$) are interleaved.
For example, if ${\cal I}$ is defined by the pattern $(L,B,L,S)$, then the schedule ${\cal L}$ is put in ${\cal I}$ in every other
position starting from position~$1$, the schedule ${\cal S}$ is in every fourth position starting from position~$4$, and $b_1$ occupies the remaining positions. 
%




\begin{theorem}
There is a polynomial-time algorithm which 
for any instance $I$ of the BGT problem,
computes a $(8/5+o(1))$-approximation schedule.
\end{theorem}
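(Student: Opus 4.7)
The plan is to partition the input $I$ into a set $L$ of ``large'' growth rates (at least $1/\log^{(4)}n$) and a set $S$ of ``small'' growth rates (below $1/\log^{(4)}n$), compute separate schedules $\mathcal{L}$ for $L$ and $\mathcal{S}$ for $S$, and merge them into a periodic interleaving whose shape depends on $\bar{S}=\sum_{i\in S}h_i$ (and in some sub-cases on $h_1$). All estimates use the three lower bounds $\OPT(I)\ge H=1$, $\OPT(I)\ge 2h_1$, and $\OPT(I)\ge \OPT(L)$.

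First I would prepare the two component schedules. Since $|L|\le \log^{(4)}n$, an optimal cyclic schedule $\mathcal{L}$ of the stand-alone $L$-instance can be obtained by exhaustive enumeration in $o(n)$ time: the 2-approximation via Pinwheel already shows $\OPT(L)\le 2$, so each $L$-bamboo has a tiny number of distinct reachable heights, bounding the shortest optimal period by a small function of $|L|$. For $S$, since every rate is $o(1)$, the Main Algorithm of Section~\ref{SubSection-approx-algo-small-rates} produces $\mathcal{S}$ with $\MH(\mathcal{S})\le(1+o(1))\bar{S}$ when $\bar{S}=\Omega(1)$; when $\bar{S}=o(1)$ any 2-approximation suffices, because the $S$-contribution is already $o(1)$.

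Next I would do a case analysis driven by $\bar{S}$. When $\bar{S}\ge 3/4$, split $S$ evenly into $S_1,S_2$ (possible since every $h_i\in S$ is $o(1)$) and use the pattern $(L,S_1,S_2)$: the $L$-heights are at most $6(1-\bar{S})\le 3/2$ (2-approx stretched by $3$) and the $S_j$-heights are at most $(3/2)\bar{S}+o(1)\le 3/2+o(1)$. When $3/5\le \bar{S}<3/4$, use the pattern $(L,S)$: the $L$-heights are at most $4(1-\bar{S})\le 8/5$ and the $S$-heights are at most $2\bar{S}+o(1)<3/2+o(1)$. When $\bar{S}<3/5$, the $L$-part is heavy and I would use the \emph{optimal} $\mathcal{L}$ together with denser patterns of the form $(L^k,S)$, where $k$ is chosen to balance $\bar{S}$. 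In such a pattern each $\mathcal{L}$-gap $g_i$ stretches to a real gap at most $\lceil (k+1)g_i/k\rceil$, so each $L$-bamboo reaches at most $((k+1)/k)\OPT(L)+h_1$ while each $S$-bamboo reaches at most $(k+1)(1+o(1))\bar{S}$; choosing $k$ in ranges tuned to $\bar{S}$ (e.g.\ $k=2$ for moderate $\bar{S}$, $k=3,4,\ldots$ for smaller $\bar{S}$) keeps both sides at most $8/5+o(1)$ against the appropriate lower bounds.

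The main obstacle is the small-$\bar{S}$ regime, especially the boundary sub-cases where the fastest bamboo $b_1$ is close to $1/2$. There the $L$-mass is too large for a simple pattern, the 2-approximation of $\mathcal{L}$ is too lossy, and the additive $O(h_1)$ slack from merging must be charged against $\OPT(I)\ge 2h_1$ rather than against $H=1$. The case distinctions therefore need a secondary split on $h_1$: in some sub-cases I would isolate $b_1$ from $L$ and use patterns of the form $(b_1,L\setminus\{b_1\},b_1,S)$, so that the two frequencies $h_1/\OPT(I)\approx 1/2$ and $\OPT(L\setminus\{b_1\})/\OPT(I)$ can be bounded separately, their combined contribution kept under $8/5+o(1)$. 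The total running time is $O(n\log n)$, dominated by the Main Algorithm invocation for $\mathcal{S}$, since the exhaustive search for $\mathcal{L}$ and all interleaving bookkeeping take $o(n)$ time.
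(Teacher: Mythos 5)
You take essentially the same approach as the paper: split $I$ into a set $L$ of large rates (small enough in cardinality for exhaustive optimal search of $\mathcal{L}$ in $o(n)$ time) and a set $S$ of small rates (on which the Main Algorithm gives a $(1+o(1))$-approximation $\mathcal{S}$), then interleave $\mathcal{L}$ and $\mathcal{S}$ with periodic patterns chosen by case analysis on $\bar{S}$ and $h_1$. Your handling of the $\bar{S}\ge 3/5$ range (patterns $(L,S)$ and $(L,S_1,S_2)$ with the $2$-approximation on $L$) is sound and is a minor variant of the paper's Cases 4 and 5, which instead rebalance mass from $S$ into $L$ so that $\bar{S'}$ hits a fixed target.

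The gap is in the $\bar{S}<3/5$ regime, which is exactly where the $8/5$ threshold is binding. Your bound on an $L$-bamboo under pattern $(L^k,S)$, namely $((k+1)/k)\OPT(L)+h_1$, is too loose: with $k=2$ it gives $(3/2)\OPT(L)+h_1$, and since all one can say in general is $\OPT(L)\le\OPT(I)$ and $h_1\le\OPT(I)/2$, this only yields a factor $2$, not $8/5$. The paper instead uses the sharper per-frequency estimate: if $f$ is the gap of $b$ in $\mathcal{L}$, the stretched height is $(f+\lceil f/k\rceil)h_b\le(1+\lceil f/k\rceil/f)\cdot\OPT(I)$. For $k=3$ this is at most $(3/2)\OPT(I)$ for every $f\ge 2$; for $k=2$ the factor $1+\lceil f/2\rceil/f$ is at most $8/5$ for every integer $f\ge 2$ \emph{except} $f=3$, and the single bad case $f=3$ is closed additively by $5h_b\le 5h_1\le 8/5$ when $h_1\le 8/25$, with $h_1>8/25$ handled by isolating $b_1$ into its own slot and rebalancing $S$ and $L$. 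You anticipate the need to isolate $b_1$ in these boundary sub-cases, which is indeed the right move, but you leave it as a plan without verifying the arithmetic, and the additive $+h_1$ slack you carry throughout cannot be beaten down to $8/5$ without switching to the $f$-granular bound. You also leave $|L|=1$ implicit; that case needs (easy) separate treatment since there is no nontrivial optimal $L$-schedule to interleave.
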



\noindent
\paragraph{The basic split of the input sequence.} 
We partition $I$ into two groups 
of large growth rates
$L=\{b_i:h_i\ge 1/m\}$
and small growth rates 
$S=\{b_i:h_i< 1/m\}$.
We set the threshold value $m = \log n/ (4\log\log n)$, where 
$\log x$ stands here for $\log_2 (x)$, if $x > 1$, or for $1$, if $x \le 1$ 
(so $m$ is well-defined and positive for all integers $n \ge 2$).
Observe that $m < n$, so $S$ is never empty, but $L$ can be empty, if there are no large growth rates.
We define 
$\bar{L} = H(L) = \sum_{b_i\in L} h_i$
and $\bar{S}= H(S)$, so
$\bar{L}+\bar{S}=1$.
Generally, if $A$ is a subset of bamboos,
then 
$\bar{A}$ denotes the sum of the rates of bamboos in~$A$.
Let ${\cal S}$ be a schedule for $S$ computed by
the approximation algorithm 
from 
Section~\ref{SubSection-approx-algo-small-rates}.
If $\bar{S}=\Omega(1)$, then $\MH({\cal S}) = \bar{S}(1+o(1))$, 
and if $\bar{S}=o(1),$ then $\MH({\cal S}) \le 2\bar{S} = o(1)$,
so in both cases $\MH({\cal S}) = \bar{S}+ o(1)$.

\begin{fact}\label{opt}
An optimal schedule
for the growth rates $L$ can be computed 
in $o(n)$ time. 
\end{fact}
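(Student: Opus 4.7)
The plan is to exploit three observations: the set $L$ is very small in terms of $n$, the optimum $\OPT(L)$ is bounded by an absolute constant, and together these force the configuration space of any optimal schedule to be sublinear. An exhaustive search through configurations then fits within $o(n)$ time.

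First I would bound $|L|$. Since every $b_i \in L$ has $h_i \ge 1/m$ (where $m = \log n/(4\log\log n)$) and $\bar L \le H(I) = 1$, we get $|L| \le m$. The $2$-approximation from Section~\ref{s:pinwheel} (applied to the subinstance $L$) gives $\OPT(L) \le 2\bar L \le 2$, so in any optimal schedule each bamboo $b_i$ waits at most $2/h_i \le 2m$ days between consecutive cuts. Hence at each integer day the height of $b_i$ is one of at most $2m$ values of the form $k h_i$, $1 \le k \le 2m$, and the total number of possible configurations (height vectors of $L$) is at most $N \le (2m)^{|L|} \le (2m)^m$.

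Next I would check that $N$ is sublinear in $n$. A direct computation gives
\[
\log_2 N \;\le\; m\log_2(2m)\;\le\; m\bigl(1+\log_2 m\bigr) \;\le\; 2m\log\log n \;=\; \frac{2\log n \cdot \log\log n}{4\log\log n}\;=\;\frac{\log n}{2},
\]
so $N \le \sqrt n$. Since the trimming process is deterministic once a state repeats, there exists an optimal schedule that is eventually periodic with period at most $N \le \sqrt n$.

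I would then find such a schedule via a graph-search in the configuration graph $G$: nodes are configurations, and for each configuration and each choice $i \in \{1,\ldots,|L|\}$ there is an outgoing edge labelled by the peak bamboo height resulting from trimming $b_i$. The optimal value corresponds to the minimum-bottleneck cycle reachable from the all-zero start state. With $|V| \le \sqrt n$, $|E| \le |L|\cdot|V| \le m\sqrt n$, and at most $O(m^2)$ distinct label values, binary searching over the label values and running a reachability/cycle test per candidate yields total time $O\bigl((m\sqrt n)\cdot \log m\bigr) = O(\sqrt n \log n)$, which is $o(n)$. The expanded schedule itself can then be read off from the cycle.

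The main technical obstacle is the quantitative bookkeeping to verify that $(2m)^m \le \sqrt n$ and that the resulting graph search is genuinely $o(n)$ — this is the point where the carefully tuned threshold $m = \log n/(4\log\log n)$ in the large/small split pays off, giving us enough room for the brute-force optimisation on $L$ without dominating the $O(n\log n)$ time of the approximation used for $S$.
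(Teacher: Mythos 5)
Your proof is correct and follows essentially the same route as the paper: bound $|L| \le m$, use the $2$-approximation to bound $\OPT(L) \le 2$ and hence limit each bamboo to at most $2m$ heights, observe that the configuration space has size at most $(2m)^m$ which the choice of $m$ makes sublinear, and then find an optimal (eventually periodic) schedule by binary search over the $O(m^2)$ candidate heights combined with a reachability/cycle test in the configuration graph. Your explicit computation $\log_2 N \le \tfrac{1}{2}\log n$, i.e.\ $N \le \sqrt{n}$, makes concrete the calculation the paper only asserts, but the argument and time bound are the same.
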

\begin{proof}
The lower bound of $1/m$ on the growth rates in $L$
imply that $|L|\le m$.
Since $\OPT(L) \le 2\bar{L} \le 2$, then 
in an optimal schedule for $L$, every bamboo can have at most $2m$ different heights. 
This means that the number of configurations to consider is at most $(2m)^m$, where a configuration 
is an $|L|$-tuple of possible heights of the bamboos in $L$, as observed just before the next cutting.
Each (infinite) schedule $\ell = (i_1, i_2, \ldots)$ can be represented as a sequence of configurations
$(C_1, C_2, \ldots)$.
If this sequence is $(C_1, C_2, \ldots, C_k, C_{k+1}, \ldots, C_{k+p}, \ldots)$,
where $C_{k+1} = C_{k+p}$ is the first repeat of the same configuration, 
then for the periodic schedule $\ell^*$ formed as $(i_1, i_2, \ldots, i_k)$ followed by repetitions of
$(i_{k+1}, \ldots, i_{k+p})$, we have $\MH(\ell)^* \le \MH(\ell)$.
Hence, there is an optimal schedule which is periodic, where $k$ (the length of the initial sequence of 
distinct configurations)
and $p$ (the length of the period) are both at most $(2m)^m$.
Such a periodic optimal schedule can be found by an exhaustive search, or any better search strategy which 
guarantees taking into account all periodic schedules.

This type of argument -- a configuration space of finite size and finitely many feasible structures --
has been used previously in analyses of processes of similar nature.
For example, in the context of computing a perpetual schedule of maintaining machines, 
such an argument was used by 
Anily~{\it et al.}~\cite{DBLP:journals/dam/AnilyGH98}.

We provide further details to justify our choice of $m$.
We have set the value of $m$ 
appropriately small in terms of $n$, so that the following algorithm finds an optimal schedule for 
$L$ in $o(n)$ time. 
Let $G$ be the directed graph of the configurations space: the vertices are the configurations 
with height at most $2|L|$ (that is, with the maximum height of bamboo at most $2|L|$),
and each edge is
a possible one-round change from the current configuration to another one, defined by cutting a given bamboo.
This graph has at most $(2m)^m$ vertices and each vertex has at most $|L| \le m$ outgoing edges 
(at each configuration, there are $|L|$ choices of a bamboo for cutting, but some choices may take the maximum height of a bamboo above the $2|L|$ threshold).
For any positive number $M$, $\OPT(L) \le M$, if and only if, 
the subgraph of $G$ induced by all configurations with height at most $M$ has a cycle reachable from the initial 
configuration $(h_j: j \in L)$ (the configuration just before the first cutting).
Thus we can find $\OPT(L)$, and an optimal schedule for $L$, by binary search over
the $O(m^2)$ possible heights of
configurations (each bamboo has at most $2m$ possible heights). 
In this binary search, one iteration takes time linear in the size of graph $G$,
so the total running time is $O((2m)^m m \log m) = O(n^{1/2})$.
%
%
%
\qed
\end{proof}

\begin{fact}\label{apr}
For the set $S$, if $\bar{S}=\Omega(1),$ 
then the algorithm of
Section~\ref{SubSection-approx-algo-small-rates}
computes a $(1+o(1))$-approximation schedule
$\cal S$ for $S$
in $O(n\log n)$ time.
\end{fact}
\begin{proof}
A direct consequence of 
Theorems~\ref{th:main}
and~\ref{th:main2}.\qed
\end{proof}

If $L = \emptyset$, then we have a $(1+o(1))$-approximation for the whole instance~$I$ from Fact~\ref{apr}, so from now on we assume
that $L\not= \emptyset$.
We consider separately the cases specified below. The $(8/5 +o(1))$-approximation factor follows because for each case and 
each $b\in I$, we either establish that the maximum height of $b$ in ${\cal I}$ is at most $8/5 +o(1)$, 
or we establish that it is at most $\OPT(I)(8/5 +o(1))$.

\begin{description}
\item[Case 1:] $|L| \ge 2,\;$ $0 < \bar{S}\le 2/5,$
\item[Case 2:] $|L| \ge 2,\;$ $2/5 < \bar{S}\le 8/15,\;$ $h_1 \le 8/25$,
\item[Case 3:] $|L| \ge 2,\;$ $2/5 < \bar{S}\le 8/15,\;$ $h_1 > 8/25$,
\item[Case 4:] $|L| \ge 2,\;$ $8/15< \bar{S}\le 3/5,$
\item[Case 5:] $|L| \ge 1,\;$ $3/5 < \bar{S} < 1$,
\item[Case 6:] $|L| = 1,\;$ $0 < \bar{S} \le 3/5$.
\end{description}

\noindent
{\bf Case 1:} $|L| \ge 2,$ $0 < \bar{S}\le 2/5$. 
We use an optimal schedule ${\cal L}$ for $L$ and the pattern $(L,L,L,S)$.
We obtain a schedule ${\cal I}$ where the bamboos in $S$ stay within the height 
$4 \cdot (\bar{S} + o(1)) \le 8/5 + o(1)$.  
For any $b\in L$, if $f$ denotes the frequency of $b$ in ${\cal L}$, that is, the longest time, in rounds, between two consecutive
cuts of $b$ in ${\cal L}$, then the frequency of $b$ in ${\cal I}$ is at most 
$f+\lceil f/3 \rceil$ 
(for two consecutive cuts of $b$ 
in $\cal L$, we are adding between them in ${\cal I}$ at most $\lceil f/3 \rceil$ cuts from sequence ${\cal S}$).
Thus the height of $b$ in ${\cal I}$ is never greater than
\begin{eqnarray}
(f+\lceil f/3 \rceil)h_b &=& fh_b \left( 1 + \frac{\lceil f/3 \rceil}{f} \right)
\le \OPT(L)\left( 1 + \frac{\lceil f/3 \rceil}{f} \right)  \nonumber \\
&\le& \OPT(I)\left( 1 + \frac{\lceil f/3 \rceil}{f}  \right)
\le \frac{3}{2}\OPT(I). \label{kjlklew3w}
\end{eqnarray}
The last inequality holds because for $f \ge 2$, $\lceil f/3\rceil \le f/2$.

\smallskip
\noindent
{\bf Case 2:} $|L| \ge 2,$ $2/5 < \bar{S}\le 8/15,$ $h_1 \le 8/25$.
We use an optimal schedule ${\cal L}$ for $L$ and the pattern $(L,L,S)$.
For any $b\in S$, its height in ${\cal I}$ is at most 
$3 \cdot (\bar{S} + o(1)) \le 8/5 + o(1)$.
For any $b\in L$, its frequency $f$ in ${\cal L}$ increases to at most 
$f+\lceil f/2 \rceil$ in ${\cal I}$, since now for two consecutive cuts of $b$ 
in $\cal L$, we are adding in ${\cal I}$ at most $\lceil f/2 \rceil$ cuts from sequence ${\cal S}$. 
Thus, following a similar argument as in~\eqref{kjlklew3w}, the height of $b$ in ${\cal I}$ is at most
$\OPT(I)(1+ \lceil f/2 \rceil/f)$. The factor $1+ \lceil f/2 \rceil/f$ is at most $8/5$ for each $f\ge 2$ except $f=3$
(we have $\lceil f/2 \rceil \le (3/5)f$ for each integer $f\ge 2$ except $f=3$).
If $f=3$, then the height of $b$ in ${\cal I}$ is at most
$(f+ \lceil f/2 \rceil) h_1 = 5 h_1 \le 8/5$.

\smallskip
\noindent
{\bf Case 3:} $|L| \ge 2,$ $2/5 < \bar{S}\le 8/15,$ $h_1 > 8/25$.
We remove $b_1$ from $L$ and put it in a separate singleton set $B$, 
and we move some bamboos from $S$ to $L$, to obtain a partition of $I$ into three sets 
$B =\{b_1\}$, $S'$ and $L'$ such that $\bar{S'} = 2/5 + o(1)$
and $\bar{L'} \le 3/5 - h_1$. This is feasible, because all growth rates in $S$ are $o(1)$.
The set $L'$ may now be large, with size up to linear in $n$, but from this case on, 
we do not compute an optimal schedule for $L'$,
using instead a schedule ${\cal L'}$
computed by the 2-approximation algorithm mentioned in Section~\ref{s:pinwheel}, which runs in $O(n\log n)$ time.
For set $S'$, we use a schedule ${\cal S'}$ computed by
the approximation algorithm 
from 
Section~\ref{SubSection-approx-algo-small-rates}.

The final schedule ${\cal I}$ is defined by the pattern $(L',B,L',S')$, 
if $8/25 < h_1 \le 2/5$, or the pattern $(B,L',B,S')$, 
if $h_1 > 2/5$.
In both cases, the height of any $b\in S'$ in ${\cal I}$ is at most $4(2/5 + o(1))= 8/5 + o(1)$.
In the former case, for any $b\in L'$, the height of $b$ in ${\cal L'}$ is at most $2\bar{L'}$, so 
at most $4\bar{L'} \le 4(3/5 - h_1) \le 28/25 < 8/5$ in ${\cal I}$, and 
the height of $b_1$ is at most $4 h_1 \le 8/5$.
In the latter case, for any $b\in L'$, the height of $b$ in ${\cal L'}$ is at most $2\bar{L'}$, so 
at most $8\bar{L'} \le 8(3/5 - h_1) \le 8/5$ in ${\cal I}$, and
the height of $b_1$ is at most $2 h_1 \le \OPT(I)$.

\smallskip
\noindent
{\bf Case 4:} $|L| \ge 2,$ $8/15 < \bar{S}\le 3/5$.
We move some bamboos from $S$ to $L$, to obtain a partition of $I$ into two sets 
$S'$ and $L'$ such that $\bar{S'} = 8/15 + o(1)$,
and $\bar{L'} \le 7/15$, and we compute schedules ${\cal S'}$ and ${\cal L'}$ as in the previous case.
The final schedule ${\cal I}$ is defined by the pattern $(L',L',S')$. 
The height of any $b\in S'$ in ${\cal I}$ is at most $3(8/15 + o(1))= 8/5 + o(1)$.
As in Case 2, the frequency of any $b\in L'$ increases from $f$ in ${\cal L}$ to at most 
$f+\lceil f/2 \rceil$ in ${\cal I}$, so the maximum height of $b$ increases from $fh_b \le 2\bar{L'} \le 14/15$ 
in ${\cal L}$ to at most 
$(f+\lceil f/2 \rceil)h_b \le (14/15)(1+\lceil f/2 \rceil/f) \le 8/5$ in ${\cal I}$.
For the last inequality, 
recall from Case 2 that $\lceil f/2 \rceil \le (3/5)f$ for 
all $f \ge 2$ except $f=3$, and verify the $f=3$ case separately.

\smallskip
\noindent
{\bf Case 5:} $|L| \ge 1,$ $3/5 < \bar{S} < 1$.
Move some bamboos from $S$ to $L$, to obtain a partition of $I$ into two sets 
$S'$ and $L'$ such that $\bar{S'} = 3/5 + o(1)$
and $\bar{L'} \le 2/5$, compute schedules ${\cal S'}$ and ${\cal L'}$ as in the previous two cases,
and define the final schedule ${\cal I}$ by the pattern $(L',S')$. 
The height of any $b\in S'$ in ${\cal I}$ is at most $2(3/5 + o(1)) < 8/5$, 
and the height of any $b\in L'$ in ${\cal I}$ is at most $4\bar{L'} \le 8/5$. 

\smallskip
\noindent
{\bf Case 6:} $|L| = 1,$ $0 < \bar{S} < 3/5$.
The set $I$ is partitioned into $B=\{b_1\}$ and $S = I - B$, and the final schedule ${\cal I}$ is defined by $(B,S)$.
The height of $b\in S$ in ${\cal I}$ is at most $2(3/5 + o(1)) < 8/5$, 
and the height of $b_1$ is at most $2h_1 \le OPT(I)$. 

\smallskip
The running time of this approximation algorithm is dominated 
by the $O(n\log n)$ time of the approximation algorithm from Section~\ref{SubSection-approx-algo-small-rates}.


%
%
\section{Continuous BGT}\label{sec:ContinuousBGT}

We consider now the continuous variant of the BGT problem.
Since this variant models scenarios when bamboos are spread over some geographical area,
we will now refer not only to bamboos $b_1, b_2, \ldots , b_n$
but also to the points $v_1, v_2, \ldots , v_n$
(in the implicit underlying space) where these bamboos are located.
We will denote by $V$ the set of these points.

Recall that input $I$
for the continuous BGT problem consists of the growth rates $(h_i:\: 1\le i \le n)$
and the travel times between bamboos $(t_{i,j}:\: 1\le i,j\le n)$.
We assume that $h_1 \geq h_2 \geq \ldots \geq h_n$, as before,
and normalize these rates, for convenience, so that $h_1+h_2+ \ldots +h_n = H= 1$
(this normalization is done without loss of generality, since only the relative heights of bamboos are relevant).
We assume that the travel distances form a metric on~$V$.

For any $V' \subseteq V$,
the minimum growth rate among all points in $V'$ is denoted by $h_{\min}(V')$,
and the maximum growth rate among all points in $V'$ is denoted
by $h_{\max}(V')$. Let $h_{\min} = h_{\min}(V) = h_n$, and $h_{\max} = h_{\max}(V) = h_1$.
Recall that we assume $n\ge 2$ (to avoid the trivial case), and 
now we additionally assume that $h_{\max} > h_{\min}$ (the uniform case, when $h_{\max} = h_{\min}$, is TSP).

The diameter of the set $V$ is denoted by $D = D(V) = \max\{t_{i,j}:\: 1 \le i, j \le n\}$.
For any $V' \subseteq V$, $\MST(V')$ denotes the minimum weight of a spanning tree on $V'$
(the travel times are the weights of the edges).
Recall that for an algorithm $\cA$ and input $I$,
$\MH(\cA(I))$ denotes the maximum height that any bamboo ever reaches, if
trimming is done according to the schedule computed by $\cA$,
and $\OPT(I)$  is the optimal (minimal) maximum height of a bamboo over all schedules.

\subsection{Lower bounds and simple approximation based on discrete BGT}\label{sec:Continuous-LB}

We first show some simple lower bounds on the maximum height of a bamboo.
For notational brevity, we omit the explicit reference to the input $I$.
For example, the inequality $\MH(\cA) \ge D h_{\max}$ in the lemma below
is to be understood as $\MH(\cA(I)) \ge D(V(I)) \cdot h_{\max}(V(I))$, for each input instance $I$.

\begin{lemma}\label{LB1}
$\MH(\cA) \ge D h_{\max}$, for any algorithm $\cA$.
\end{lemma}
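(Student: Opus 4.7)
The plan is to argue by contradiction: suppose some schedule $\cA(I)$ achieves $\MH(\cA(I)) < Dh_{\max}$, and derive a contradiction with the defining property of $D$. If $b_1$ is never trimmed its height grows to infinity, so $\MH = \infty$ and the bound holds trivially; hence I may assume $b_1$ is trimmed at an infinite sequence of times $T_1 < T_2 < \cdots$, and set $T_0 := 0$ since the robot starts at $v_1$. The supposition that $b_1$'s height never reaches $Dh_1$ together with the fact that $b_1$ grows at rate $h_1$ between consecutive trims forces
\[
T_{i+1} - T_i < D \qquad \text{for every } i \ge 0.
\]

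The key step is a trajectory-confinement argument. Fix any interval $[T_i, T_{i+1}]$: the robot is at $v_1$ at both endpoints, and the travel times form a metric. If at some intermediate time $t$ the robot is at a point $x$, then the fastest-route assumption and the triangle inequality give $t_{1,x} \le t - T_i$ (travel from $v_1$ out to $x$) and $t_{1,x} \le T_{i+1} - t$ (travel from $x$ back to $v_1$), so
\[
t_{1,x} \;\le\; \tfrac{1}{2}(T_{i+1} - T_i) \;<\; \tfrac{D}{2}.
\]
Since this holds on every such interval, the robot's entire trajectory lies strictly within distance $D/2$ of $v_1$.

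To finish, observe that if $\MH$ is finite then every bamboo $b_j$ must be trimmed infinitely often (else its height grows unboundedly), so the robot eventually visits every $v_j$, forcing $t_{1,j} < D/2$ for all $j$. The triangle inequality then yields $t_{i,j} \le t_{i,1} + t_{1,j} < D$ for every pair, contradicting $D = \max_{i,j} t_{i,j}$. The main thing to be careful about is preserving the \emph{strict} inequality $T_{i+1}-T_i < D$ through the halving step, so that the final pairwise bound is strictly less than $D$ and genuinely contradicts the diameter; using the initial position of the robot at $v_1$ to bootstrap the first interval $[0, T_1]$ is the only mildly delicate point.
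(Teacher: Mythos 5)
Your proof is correct and is essentially the contrapositive of the paper's direct argument. The paper observes that some point lies at distance $\ge D/2$ from $v_1$ (by the triangle inequality applied to a diametral pair), so a round trip from $v_1$ through that point keeps $b_1$ uncut for time $\ge D$; you instead assume $b_1$ stays strictly below $Dh_1$, deduce that every visited point is within $D/2$ of $v_1$, and contradict the definition of the diameter.
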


\begin{proof}
The robot must visit another point $x$ in $V$ at distance at least $D/2$ from $v_1$.
When the robot comes back to $v_1$ after visiting $x$ (possibly via a number of other points in $V$),
the bamboo at $v_1$ has grown at least to the height of $D h_1 = D h_{\max}$.
\ep\end{proof}

\begin{lemma}\label{LB2}
$\MH(\cA) \ge h_{\min}(V') \cdot \MST(V'),$ for any algorithm $\cA$ and $V' \subseteq V$.
\end{lemma}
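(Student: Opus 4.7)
The plan is to use the slowest-growing bamboo of $V'$ as a clock that forces every bamboo in $V'$ to be trimmed within a bounded time window, and then apply a standard MST lower bound on any walk that spans $V'$.

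First, I would pick $v^* \in V'$ with $h(v^*) = h_{\min}(V')$ and set $T^* = \MH(\cA)/h_{\min}(V')$. Since every $v \in V'$ has growth rate at least $h_{\min}(V')$ and starts at height $0$, the constraint that its height never exceeds $\MH(\cA)$ forces its first trim to occur by time $\MH(\cA)/h(v) \le T^*$. Consequently, each bamboo in $V'$ is visited at least once during $[0, T^*]$.

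Next, because trimming is instantaneous, within $[0,T^*]$ the robot travels a total distance of at most $T^*$. Reading off the subsequence $p'_0, p'_1, \ldots, p'_m$ of cuts that land on points of $V'$ (in the order of visit), the triangle inequality applied to any detours through points outside $V'$ yields $\sum_{j=0}^{m-1} t(p'_j, p'_{j+1}) \le T^*$. This subsequence is a walk in the complete weighted graph on $V'$ that touches every vertex of $V'$, so its edge multiset forms a connected subgraph spanning $V'$, and in particular contains a spanning tree of weight at least $\MST(V')$. Combining, $T^* \ge \MST(V')$, which rearranges to the claimed $\MH(\cA) \ge h_{\min}(V') \cdot \MST(V')$.

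The one point that deserves care is the triangle-inequality contraction: between two consecutive $V'$-visits $p'_j$ and $p'_{j+1}$ the robot may pass through several points outside $V'$, but each such excursion has length at least the direct distance $t(p'_j, p'_{j+1})$, so the contracted walk is no longer than the actual trajectory. The initial segment from the robot's starting point $v_1$ to $p'_0$ simply adds extra travel, which only strengthens the inequality $T^* \ge \MST(V')$. I do not expect a real obstacle, since the argument is essentially the classical reduction of covering walks to spanning trees, with the height bound $\MH(\cA)$ supplying the periodicity on visits to $v^*$.
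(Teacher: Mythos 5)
Your proof is correct and takes essentially the same approach as the paper's: both rest on the observation that any walk visiting all of $V'$ has length (and hence duration) at least $\MST(V')$. The paper applies this directly to the bamboo in $V'$ whose first visit occurs last, whereas you run the same argument in reverse by deriving the deadline $T^* = \MH(\cA)/h_{\min}(V')$ from the height bound and then concluding $T^* \ge \MST(V')$.
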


\begin{proof}
Let $v$ be the point in $V'$ visited last: all points in $V'\setminus\{v\}$ have been visited
at least once before the first visit to $v$.
The distance traveled until the first visit to $v$ is at least $\MST(V')$,
so the bamboo at $v$ has grown to the height at least $h_v \cdot \MST(V')\ge h_{\min}(V') \cdot \MST(V')$.
\ep\end{proof}

One may ask how good are the schedules for continuous BGT which are computed taking into account only 
the growth rates, ignoring the travel times. If we use, for example, the schedules computed by 
the 2-approximation algorithm from Section~\ref{sec:DiscreteBGT-offline},
which guarantee that the maximum height of a bamboo does not grow above $2$
(recall that we normalize the growth rates, so $H = 1$), 
then there are at most $\lfloor 2/ h_i \rfloor - 1$
cutting actions
between two consecutive cuttings of bamboo $b_i$.
Otherwise bamboo $b_i$ would grow in discrete BGT to the height at least 
$h_i (\lfloor 2/ h_i \rfloor +1) > 2 = 2H$.
Thus in continuous BGT the time between two consecutive cuttings of 
bamboo $b_i$ is at most $D\lfloor 2/ h_i \rfloor$, so the height of this bamboo is never greater than 
$h_i D \lfloor 2/ h_i \rfloor \le 2D$.
Combining this with Lemma~\ref{LB1},
we conclude that the 2-approximation algorithm for discrete BGT 
is a $(2/h_{\max})$-approximation algorithm for continuous BGT.
In particular, this approach gives $\Theta(1)$-approximate algorithm for continuous BGT
for inputs with $h_1 = \Theta(1)$. 
To derive good approximation algorithms for other types of input,
we will use the lower bound from Lemma~\ref{LB2}.

\subsection{Approximation algorithms}\label{sec:ContinuousAlgo}


We present in this section three approximation algorithms for continuous BGT which
are based on computing spanning trees.
Algorithm 1 computes only one spanning tree of all points and the schedule for the robot 
is to traverse repeatedly an Euler tour of this tree. 
This simple algorithm ignores the growth rates of bamboos and computes a schedule of cutting
taking into account only the travel times between points.

Algorithms 2 and 3 group the bamboos according to the similarity of their growth rates 
and compute a separate spanning tree 
for each group. 
The robot moves from one tree to the next in the Round-Robin fashion.
At each spanning tree $T$, the robot walks along the Euler-tour of this tree for time~$D$ before moving 
to the next tree. When the robot eventually comes back to tree $T$, it resumes traversing 
the Euler tour of $T$
from the point when it last left this tree.
The growth rates of bamboos in the same spanning tree are within a factor of~2.
Algorithm 3 differs from Algorithm 2 by using spanning trees only for the first $\Theta(\log n)$ groups, which 
include bamboos with growth rates greater than $1/n^2$. 
The remaining bamboos, which all have low growth rates, are dealt with individually.

We describe our Algorithms 1, 2 and 3 in pseudocode
and give their approximation ratio in the theorems below.
The description of each algorithm consists of two parts: \emph{pre-processing} and \emph{walking}. 
We do not explicitly mention the actions of cutting/attending bamboos, assuming that 
whenever the robot passes a point in $V$\!, it cuts the bamboo growing at this point.

\begin{figure}[t]
\vspace{-.2cm}
\hrule\vspace{.1cm}
\vspace{3mm}
\textbf{Algorithm 1:} An $O({h_{\max}}/{h_{\min}})$-approximation algorithm
for continuous BGT.
\begin{enumerate}
\item
{[\emph{Pre-processing}]}
Calculate a minimum spanning tree $T$ of the point set $V$.
\item
{[\emph{Walking}]}
Repeatedly perform an Euler-tour traversal of $T$.
\end{enumerate}
\hrule
\end{figure}

\begin{theorem}
Algorithm~1 is an $O({h_{\max}}/{h_{\min}})$-approximation algorithm for the continuous BGT problem.
\end{theorem}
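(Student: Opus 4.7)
The plan is to bound $\MH$ of the schedule produced by Algorithm~1 from above by $2 h_{\max}\cdot \MST(V)$, and to use Lemma~\ref{LB2} to bound $\OPT$ from below by $h_{\min}\cdot \MST(V)$. Dividing gives the claimed $O(h_{\max}/h_{\min})$ ratio, with explicit constant~$2$.

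For the upper bound, I would first recall the standard fact that an Euler-tour traversal of a spanning tree $T$ crosses every edge exactly twice, and therefore has total length $2\cdot\MST(V)$. Since the walking phase of Algorithm~1 is a perpetual repetition of this tour, the time between any two consecutive visits to a fixed point $v_i$ is at most $2\cdot\MST(V)$ (the first and last visits in a single tour, together with the link across the junction between successive tours, can be handled by the observation that a single full Euler tour returns to its starting vertex, so the visit pattern is cyclic with period at most $2\cdot\MST(V)$). Consequently the height of bamboo $b_i$ at any time is at most
\[
h_i\cdot 2\,\MST(V) \;\le\; 2\,h_{\max}\cdot \MST(V),
\]
yielding $\MH(\cA(I))\le 2\,h_{\max}\cdot\MST(V)$.

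For the lower bound, I would apply Lemma~\ref{LB2} with $V'=V$, obtaining $\OPT(I) \ge h_{\min}\cdot \MST(V)$. Combining the two bounds gives
\[
\frac{\MH(\cA(I))}{\OPT(I)} \;\le\; \frac{2\,h_{\max}\cdot\MST(V)}{h_{\min}\cdot\MST(V)} \;=\; 2\cdot \frac{h_{\max}}{h_{\min}} \;=\; O\!\left(\frac{h_{\max}}{h_{\min}}\right),
\]
which is the claim. The argument is entirely elementary; there is no real obstacle beyond making explicit that the cyclic Euler-tour schedule guarantees a worst-case revisit time of $2\cdot\MST(V)$ for every point, and that the lower bound of Lemma~\ref{LB2} applies to the full set $V$.
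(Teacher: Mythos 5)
Your proof is correct and follows the same route as the paper: bound the revisit time in the cyclic Euler-tour traversal by $2\cdot\MST(V)$ to get $\MH \le 2\,h_{\max}\cdot\MST(V)$, apply Lemma~\ref{LB2} with $V'=V$ to get $\OPT \ge h_{\min}\cdot\MST(V)$, and divide. The only difference is that you carry the explicit constant $2$ rather than using $O(\cdot)$ notation.
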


\begin{proof}
Let $\cA_1$ denote Algorithm 1.
Every point $v_i \in V$ is visited by $\cA_1$ at least every $2 \cdot \MST(V)$ time units.
Hence,
\begin{equation}
\label{eq:algo1a}
\MH(\cA_1) = O(h_{\max}(V) \cdot \MST(V)).
\end{equation}
According to Lemma~\ref{LB2},
\begin{equation}
\label{eq:algo1b}
\OPT = \Omega(h_{\min}(V) \cdot \MST(V)).
\end{equation}
Combining the two bounds (\ref{eq:algo1a}) and (\ref{eq:algo1b}),
it follows that Algorithm~1 is an $O({h_{\max}}/{h_{\min}})$-approximation algorithm for BGT.
\ep\end{proof}

\begin{figure}[t]
\vspace{-.2cm}
\hrule\vspace{.1cm}
\vspace{3mm}
\textbf{Algorithm 2:} An $O(\log\lceil {h_{\max}}/{h_{\min}}\rceil)$-approximation algorithm for continuous BGT.
\begin{enumerate}
\item[]{\hspace*{-\labelwidth}\hspace*{-\labelsep}[\emph{Pre-processing}]}
\item
Let $s= \lfloor\log_2({h_{\max}}/{h_{\min}})\rfloor+1$.
\item
For $i \in \{1,2\ldots,s\}$,
let $V_i= \{v_j\in V \mid 2^{i-1} \cdot h_{\min} \le h_j <  2^i \cdot h_{\min}\}$.
\item
{\bf for} $i=1$ {\bf to} $s$ such that $V_i \neq\emptyset$ {\bf do}
\item\hspace*{5mm} 
Let $T_i$ be a minimum spanning tree on $V_i$.
\item\hspace*{5mm}
Let
$C_i$ be a directed Euler-tour traversal of~$T_i$.
\item\hspace*{5mm}
Set an arbitrary point on $C_i$ as the \emph{last visited point} on $C_i$.
\item[]{\hspace*{-\labelwidth}\hspace*{-\labelsep}[\emph{Walking}]}
\item
{\bf repeat forever}
\item
\hspace*{5mm}
{\bf for} $i=1$ {\bf to} $s$ such that $V_i \neq\emptyset$ {\bf do}
\item
\hspace*{10mm}
Walk to the last visited point on $C_{i}$.
\item
\hspace*{10mm}
If $V_i$ has at least $2$ points, then walk along $C_i$ in the direction of the tour, \\
\hspace*{10mm}
stopping as soon when
the distance covered is at least $D$.
\end{enumerate}
\hrule
\vspace*{1mm}
\end{figure}

\begin{theorem}
Algorithm~2 is an $O(\log{\lceil h_{\max}}/{h_{\min}\rceil})$-approximation algorithm
for the continuous BGT problem.
\end{theorem}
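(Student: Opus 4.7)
My plan is to bound the maximum height of any bamboo $v_j$ by analysing how often it is visited, and then to compare this to the two lower bounds on $\OPT$ given by Lemma~\ref{LB1} and Lemma~\ref{LB2}, exploiting the fact that within each group $V_i$ the growth rates differ by at most a factor of~$2$.

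First I would bound the duration of one outer iteration of the ``repeat forever'' loop (one round-robin cycle). There are at most $s$ non-empty groups; for each non-empty $V_i$ the robot spends at most $D$ time units travelling to the last visited point on $C_i$ (by the diameter bound) and, if $|V_i|\ge 2$, at most $D$ plus one extra edge (so at most $2D$) walking along $C_i$ until the covered distance reaches $D$. Therefore each round-robin cycle takes time $O(sD)$.

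Next I would bound the time between two consecutive visits of the robot to a fixed bamboo $v_j\in V_i$. If $|V_i|=1$ the robot visits $v_j$ every cycle, so the gap is $O(sD)$. If $|V_i|\ge 2$, then $C_i$ has total length $2\MST(V_i)$ and at least $D$ of it is traversed per cycle, so after $\lceil 2\MST(V_i)/D\rceil$ cycles the entire tour has been covered and $v_j$ has been visited at least once. Hence the inter-visit time is at most $O(sD\cdot \lceil 2\MST(V_i)/D\rceil)=O(s\cdot\max(D,\MST(V_i)))$. Multiplying by $h_j$ yields
\[
\MH(\cA_2)\text{ on }v_j \;=\; O\bigl(s\cdot h_j\cdot\max(D,\MST(V_i))\bigr).
\]

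The core of the argument is then to show each of the two terms is $O(s\cdot \OPT)$. For the $D$ term, Lemma~\ref{LB1} gives $\OPT\ge Dh_{\max}\ge Dh_j$, so $s\cdot h_j\cdot D=O(s\cdot\OPT)$. For the $\MST(V_i)$ term, the definition of $V_i$ gives $h_j<2h_{\min}(V_i)$, and Lemma~\ref{LB2} (applied to $V'=V_i$) gives $\OPT\ge h_{\min}(V_i)\cdot\MST(V_i)$, so $s\cdot h_j\cdot\MST(V_i)\le 2s\cdot h_{\min}(V_i)\cdot\MST(V_i)=O(s\cdot\OPT)$. Substituting $s=O(\log\lceil h_{\max}/h_{\min}\rceil)$ finishes the proof.

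The main obstacle is the case split according to whether $\MST(V_i)$ exceeds $D$ or not: when the tour is longer than $D$, we cannot traverse it in a single cycle, so the ``per-cycle'' argument must be replaced by a ``per-$\lceil 2\MST(V_i)/D\rceil$-cycles'' argument, and one must check that the corresponding lower bound comes from Lemma~\ref{LB2} restricted to the group $V_i$ rather than from the global Lemma~\ref{LB1}. The factor-of-two spread within each $V_i$ is precisely what allows Lemma~\ref{LB2} to be charged against $h_j$ rather than only against $h_{\min}(V_i)$.
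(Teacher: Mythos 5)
Your proof is correct and follows essentially the same route as the paper's: bound the inter-visit time by $O(s\cdot\max\{D,\MST(V_i)\})$ using the round-robin structure, then compare against $\OPT$ via Lemmas~\ref{LB1} and~\ref{LB2} and the factor-of-two spread within each group. The only (cosmetic) difference is that you charge the $D$ term directly against $h_j\le h_{\max}$ via Lemma~\ref{LB1}, whereas the paper uniformly lower-bounds $\OPT\ge h_{\min}(V_i)\cdot\max\{D,\MST(V_i)\}$ and then applies $h_{\max}(V_i)\le 2h_{\min}(V_i)$ once.
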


\begin{proof}
For any $i \in \{1,2,\ldots,s\}$, consider any point $v \in V_i$.
The robot walks along one tree for  at most distance $2D$ and then covers at most distance $D$ to move to the next tree.
After a visit to point $v$, the robot comes back to tree $T_i$ at most 
$\left\lceil{2\cdot\MST(V_i)}/{D}\right\rceil$ times before visiting $v$ again.
Therefore, recalling from the algorithm that there are at most 
$s = \lfloor\log_2({h_{\max}}/{h_{\min}})\rfloor+1$ trees,
the distance traveled between two consecutive visits to point $v$ is at most
\begin{eqnarray*} 
3Ds \left\lceil\frac{2\cdot\MST(V_i)}{D}\right\rceil & \le & 3s (D + 2\cdot\MST(V_i)).
%
\end{eqnarray*}
Hence, the height of the bamboo at $v$ is always 
\begin{equation}
\label{eq:algo2a}
O\left(h_{\max}(V_i) \cdot \log\left\lceil\frac{h_{\max}}{h_{\min}}\right\rceil \cdot \max\{D,\MST(V_i)\}\right).
\end{equation}
On the other hand, using Lemmas~\ref{LB1} and \ref{LB2}, we obtain
\begin{equation}
\label{eq:algo2b}
\OPT \; \ge \; h_{\min}(V_i) \cdot \max\{D,\MST(V_i)\}.
\end{equation}
Combining the two bounds (\ref{eq:algo2a}) and (\ref{eq:algo2b}), and observing that 
$h_{\max}(V_i) \leq 2\cdot h_{\min}(V_i)$,
we obtain that the height of the bamboo at $v$ 
is always $O(\OPT \cdot \log\lceil {h_{\max}}/{h_{\min}}\rceil)$, so
Algorithm~2 is an $O(\log\lceil{h_{\max}}/{h_{\min}}\rceil)$-approximation algorithm for BGT.
\ep\end{proof}

\begin{figure}[t]
\vspace{-.2cm}
\hrule\vspace{.1cm}
\vspace{3mm}
\textbf{Algorithm 3:} An $O(\log n)$-approximation algorithm for continuous BGT.
\begin{enumerate}
\item[]{\hspace*{-\labelwidth}\hspace*{-\labelsep}[\emph{Pre-processing}]}
\item
Let $s= \lceil 2 \cdot \log_2 n\rceil$.
\item
Let $V_0= \{v_i\in V \mid h_i \leq n^{-2}\}$.~ Let $V_0=\{v'_0, v'_1, \ldots, v'_{\ell-1}\}.$
\item
For $i \in \{1,2,\ldots,s\}$, let $V_i= \{v_j\in V \mid 2^{i-1} \cdot n^{-2} < h_j \leq  2^i \cdot n^{-2}\}$.
\item
{\bf for} $i=1$ {\bf to} $s$ such that $V_i\neq\emptyset$ {\bf do}
\item\hspace*{5mm} 
Let $T_i$ be a minimum spanning tree on $V_i$.
\item\hspace*{5mm}
Let
$C_i$ be a directed Euler-tour traversal of~$T_i$.
\item\hspace*{5mm}
Set an arbitrary point on $C_i$ as the \emph{last visited point} on $C_i$.
\item[]{\hspace*{-\labelwidth}\hspace*{-\labelsep}[\emph{Walking}]}
\item
$j=0$.
\item
{\bf repeat forever}
\item
\hspace*{5mm}
{\bf for} $i=1$ {\bf to} $s$ such that $V_i\neq\emptyset$ {\bf do}
\item
\hspace*{10mm}
Walk to the last visited point on $C_{i}$.
\item
\hspace*{10mm}
If $V_i$ has at least $2$ points, then walk along $C_i$ in the direction of the tour, \\
\hspace*{10mm}
stopping as soon when
the distance covered is at least $D$.
\item
\hspace*{5mm}
If $V_0 \neq \emptyset$, then walk to $v'_{j}$ and let 
$j = j+1 \pmod \ell$.
\end{enumerate}
\hrule
\vspace*{1mm}
\end{figure}

\begin{theorem}
Algorithm~3 is an $O(\log n)$-approximation algorithm for the continuous BGT problem.
\end{theorem}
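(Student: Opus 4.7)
The plan is to analyse separately the bamboos in the grouped trees $V_1,\ldots,V_s$ and the slow bamboos in $V_0$, and show that in both cases the bamboo heights stay within $O(\log n)\cdot \OPT$.

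First I would bound the duration of one iteration of the outer \textbf{repeat} loop. In one iteration the robot visits each nonempty $V_i$, paying at most $D$ to reach the last visited point on $C_i$ and at most (roughly) $2D$ to cover distance at least $D$ along $C_i$, and then walks at most $D$ to reach the next $v'_j \in V_0$. Since there are at most $s = \lceil 2\log_2 n\rceil$ trees, one outer iteration takes time $O(sD) = O(D\log n)$.

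For a point $v \in V_i$ with $i\geq 1$, between two consecutive visits to $v$ the robot returns to $T_i$ at most $\lceil 2\MST(V_i)/D \rceil$ times, because each visit covers at least distance $D$ along the Euler tour $C_i$ of length $2\MST(V_i)$. Between two consecutive returns to $T_i$, at most $O(D\log n)$ time passes. So the time between consecutive visits to $v$ is
\[
O\!\left(\log n \cdot \left\lceil\frac{2\MST(V_i)}{D}\right\rceil \cdot D\right) \;=\; O\!\left(\log n \cdot \max\{D,\MST(V_i)\}\right).
\]
Since $h_v \le 2h_{\min}(V_i)$, the height of the bamboo at $v$ never exceeds
$O(\log n)\cdot h_{\min}(V_i)\cdot \max\{D,\MST(V_i)\}$, which by Lemmas~\ref{LB1} and~\ref{LB2} is $O(\log n)\cdot\OPT$.

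For a point $v'_j \in V_0$, the robot visits $v'_j$ once every $\ell\le n$ outer iterations, so consecutive visits to $v'_j$ are at most $O(\ell\cdot D\log n) = O(nD\log n)$ time apart. Since $h_{v'_j}\le n^{-2}$, the height of the bamboo at $v'_j$ is always at most $n^{-2}\cdot O(nD\log n) = O(D\log n/n)$. The main obstacle and the reason behind the threshold $n^{-2}$ is to relate this bound to $\OPT$; this is done by noting that $h_{\max} \geq H/n = 1/n$ (since $h_1+\cdots+h_n = 1$), and then Lemma~\ref{LB1} gives $\OPT\ge Dh_{\max}\ge D/n$. Hence the height of any $v'_j \in V_0$ is $O(D\log n/n) = O(\log n)\cdot\OPT$. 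Combining the two cases yields the claimed $O(\log n)$ approximation ratio.
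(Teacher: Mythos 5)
Your proof is correct and follows essentially the same route as the paper's: bound one iteration of the outer loop by $O(D\log n)$, combine with the Euler-tour length to bound the gap between consecutive visits, and then invoke Lemmas~\ref{LB1} and~\ref{LB2}. The only small difference is that you explicitly spell out the step $h_{\max}\ge H/n = 1/n$ needed to turn the $V_0$ bound into $O(\log n)\cdot\OPT$, whereas the paper leaves this implicit in the equality $O(n^{-2}\cdot n\cdot D\log n)=O(h_{\max}\cdot D\log n)$.
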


\begin{proof}
Each round of Algorithm~3, that is, each iteration of the repeat loop, is a cycle over all $s = \Theta(\log n)$ trees 
and a visit to one point in the set $V_0$.
Consider any point $v \in V_i$, for any $i \in \{1,2,\ldots,s\}$.
The distance traveled between two consecutive visits of $v$ is at most
\begin{eqnarray*}
(3Ds + D)\left\lceil\frac{2\cdot\MST(V_i)}{D}\right\rceil & \le & (3s+1) (D + 2\cdot\MST(V_i)) \\
& = & O(\log n \cdot \max\{D,\MST(V_i)\}).
\end{eqnarray*}
Hence, the height of the bamboo at $v$ is always
\begin{equation}
\label{eq:algo3a}
O(h_{\max}(V_i) \cdot \log n \cdot \max\{D,\MST(V_i)\}). \nonumber
\end{equation}
Using the lower bound~\eqref{eq:algo2b} on $\OPT$
and the fact that  $h_{\max}(V_i) \le 2 \cdot h_{\min}(V_i)$, we conclude that 
the height of the bamboo at $v$ is always $O(\OPT\cdot\log n)$.

Consider now a point $v \in V_0$.
The distance traveled between two consecutive visits of $v$ is at most
$$(3Ds + D)|V_0| \; = \; O(n \cdot D \cdot \log n).$$
Hence, the height of the bamboo at $v$ is always
\begin{equation}
\label{eq:algo3c}
O(h_{\max}(V_0) \cdot n \cdot D \cdot \log n)
= O(n^{-2} \cdot n \cdot D \cdot \log n)
= O( h_{\max} \cdot D \cdot \log n). \nonumber
\end{equation}
This and Lemma~\ref{LB1} imply that the height of the bamboo at $v$ is always 
$O(\OPT \cdot \log n)$.
Thus Algorithm~3 is an $O(\log n)$-approximation algorithm for BGT.
\ep\end{proof}

Note that the pre-processing in all Algorithms~1, 2 and 3 is dominated by the minimum-spanning tree computation,
which can be implemented in $O(n^2)$ time (e.g.~by using Prim's algorithm \cite{Prim57}). 
Then the running time to produce the schedule (the consecutive steps of the schedule) is constant per one step of the schedule.

\subsection{How tight are the upper and lower bounds?}

Our $O(\log n)$-approximation algorithm for the continuous BGT (Algorithm~3)
can return schedules which are worse than the optimal schedules by 
a factor of $\Theta(\log n)$. 
For example, consider the input which consists of two sets $V'$ and $V''$ of $n/2$ points each such that 
in each set the points are very close to each other (with the total distance to visit all points in this set  less than $D$),
but the sets are at distance greater than $D/2$ from each other. 
The rates of growth in set $V'$ include the $\Theta(\log n)$ values $1/4, 1/8, \ldots, 1/2^i, \ldots, 1/n$,
and the same rates are in set $V''$.
For this input instance the value of the optimal schedule is $\Theta(D)$: visit all points in $V'$, then all points in $V''$
(for the total distance $\Theta(D)$),
and repeat.
The schedule computed by Algorithm~3 uses $\Theta(\log n)$ trees and makes the robot traverse each tree for 
a distance at least $D$ before returning to the bamboo with the highest rate of growth of $1/4$. 
Thus this bamboo
grows to the height $\Theta(D\log n)$, which is a factor of $\Theta(\log n)$ worse than the optimum.

The approximation bounds which we presented 
in Section~\ref{sec:ContinuousAlgo} are derived 
by comparing the upper bounds on the maximum bamboo heights guaranteed by the algorithms
with the lower bounds shown in Section~\ref{sec:Continuous-LB}.
We show now a class of instances, for which
any schedule leads to bamboo heights greater than our lower bounds
by a $\Theta(\log n)$  factor.
Thus for these input instances our general lower bounds turn out to be weak,
while our $O(\log n)$-approximation algorithm computes in fact constant approximation
schedules.
To improve the approximation ratio of algorithms for the continuous BGT to $o(\log n)$,
one will therefore require both:
new algorithmic techniques as well as stronger lower bounds. 

We consider the following input $V^*$ for the continuous BGT problem.
The $n$ points in $V^*$ are placed on the Euclidean plane within a unit-radius circle.
The points lie evenly spaced along the spiral inside this circle, which starts at a point $A$ at distance $1/2$
from the center of the circle and swirls outward creating rings separated by distance $d_2 = n^{-1/3}$;
see Figure~\ref{fig:spiral}, but note that the drawing is not to scale.
We view the points in $V^*$ as ordered along the spiral, with the first point at $A$
and the Euclidean distance between two consecutive points
equal to $d_1 = n^{-2/3}$.
Thus the length of the spiral between two 
consecutive points in $V^*$ is equal to
$n^{-2/3}(1+o(1))$, so
the length of the part of the spiral which is 
occupied by the points in $V^*$ is equal to 
$n^{1/3}(1+o(1))$.
As the length of the spiral within the circle is at least $(1/2)\pi n^{1/3}$
(since $(1/2)/d_2 = n^{1/3}/2$ rings, each of them of length at least $\pi$), for any sufficiently large $n$, all points in $V^*$
are indeed inside the circle.
On the other hand, for a sufficiently large $n$, there
are two points in $V^*$ at distance 
at least $1$ from each other, so 
the diameter $D$ of $V^*$ satisfies 
$1 \le D \le 2$.

\begin{figure}[t]
\begin{center}
\includegraphics[scale=0.8]{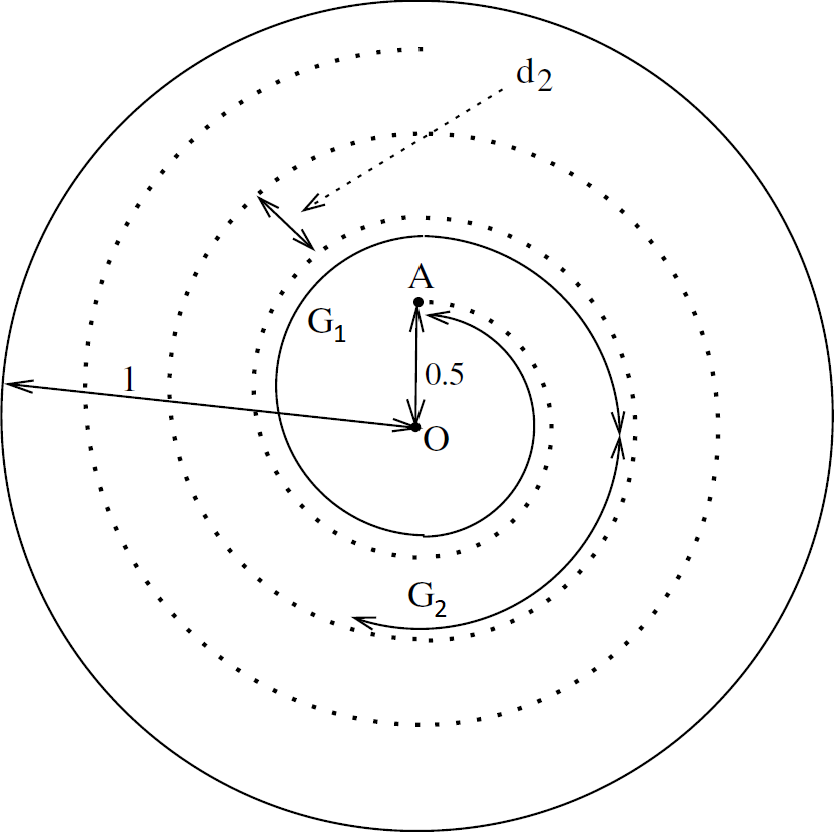}
\end{center}
\caption{Example of a spiral input.
\label{fig:spiral}}
\end{figure}

The points in $V^*$ are grouped into subsets $G_1, G_2, \ldots, G_{(\log n)/3}$ and $G'$
of consecutive points along the spiral, starting from position $A$.
Here $\log n = \log_2(n)$, and
for convenience we assume that $(\log n)/3$ is an integer.
The first two groups $G_1$ and $G_2$ are indicated in Figure~\ref{fig:spiral}.
For $i =1, 2, \ldots, (\log n)/3$, the size of group $G_i$ is
$n/2^{i}$
and each point in $G_i$ has the same growth rate $h_i = (3 -\epsilon) 2^i/(n\log n)$,
where $0 < \epsilon = o(1)$.
The last group $G'$ contains the remaining $o(n)$ points in $V^*$ and the growth rate of each point in $G'$
is equal to $h' = 1/n^{4/3} = h_{\min}(V^*)$.
The exact value of $\epsilon = o(1)$ is such that all growth rates sum up to $1$.

Since 
$h_{\max}(V^*) = h_{(\log n)/3} = \Theta(1/(n^{2/3}\log n))$,
Lemma~\ref{LB1} gives the lower bound of $\Omega(1/(n^{2/3}\log n))$ on the optimal solution.
We check now what lower bounds we can get from Lemma~\ref{LB2}.
The MST of this set of points $V^*$ is obtained by following the spiral and the weight of this MST is
equal to $nd_1 = n^{1/3}$, giving the lower bound of $h_{\min}(V^*) \cdot \MST(V^*) = 1/n$.
For each $i = 1,2, \ldots, (\log n)/3$, 
the weight of the MST of the set of points 
$V^{(i)} = \bigcup_{j = i}^{(\log n)/3} G_j$,
which is the subset of all points in $V^*$ with growth rates at least $h_i$,
is equal to
$d_1\left|\bigcup_{j = i}^{(\log n)/3} G_j\right| = \Theta(d_1 |G_i|) = \Theta(n^{1/3}/2^i)$.
This gives the lower bound of 
$h_i \cdot \MST(V^{(i)}) = \Omega(1/(n^{2/3}\log n))$.
This is the best lower bound which we can obtain from Lemma~\ref{LB2}, 
since $h_{\min}(V') \cdot \MST(V')$ 
is maximized by including in $V'$ \emph{all} 
points in $V^*$ with growth rates at least $h_{\min}(V')$.
Thus Lemmas~\ref{LB1} and~\ref{LB2} give for this input instance the lower bound
$\Omega(1/(n^{2/3}\log n))$.

The $O(\log n)$-approximation Algorithm 3 gives the schedule for $V^*$ with the maximum
bamboo height $\Theta(1/n^{2/3})$, which is a $\log n$ factor above our lower bounds.
Indeed, observe that for some $j\ge 0$,
each set $G_i$, 
$1 \le i \ge (\log n)/3$, is 
the sets $V_{j+i}$ in Algorithm~3.
Thus, to service all points in $G_i$, 
the algorithm needs
$\Theta(\MST(G_i)/D)$ iterations of the 
walking loop (the ``repeat forever'' loop).
The walking in each iteration takes 
$\Theta(D)$ time per each set $G_x$, so 
$\Theta(D\log n)$ time in total.
This means that 
a bamboo in set $G_i$ grows up to the height of 
$\Theta(h_i (D \log n) (\MST(G_i)/D)) = \Theta(1/n^{2/3})$.
We show next that for this input any possible schedule produces bamboos of height $\Omega(1/n^{2/3})$.

\begin{lemma}
For each schedule for the input $V^*$, there must be a bamboo which grows to the
height $d_1/2 = \Omega(1/n^{2/3})$.
\end{lemma}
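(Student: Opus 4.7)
The plan is a double-counting argument on the number of trims the robot performs over a long time horizon: first lower-bound the number of trims that any schedule must make in order to keep every bamboo strictly below height $d_1/2$, then upper-bound the number of trims the geometry of $V^*$ allows, and contradict the two bounds for a sufficiently long time window.

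I would start by assuming for contradiction that some schedule keeps every bamboo strictly below $h^* = d_1/2$ at all times, and consider an arbitrary time window $[0,T]$. For each bamboo $b_i$, the height cap forces consecutive trims of $b_i$ (and also the trimless intervals from $0$ to the first trim and from the last trim to $T$) to be at most $h^*/h_i$ apart, so the number of trims of $b_i$ in $[0,T]$ is at least $Th_i/h^* - O(1) = 2Th_i/d_1 - O(1)$. Summing over all $i$ and using the normalisation $\sum_i h_i = 1$, the total number of trims performed by the robot in $[0,T]$ must be at least $2T/d_1 - O(n)$.

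Next I would argue that the minimum pairwise Euclidean distance between points of $V^*$ is exactly $d_1$: consecutive points along the spiral realise this distance by construction, while two points on different rings are at distance at least $d_2 = n^{-1/3}$, which for large $n$ is much larger than $d_1 = n^{-2/3}$. Assuming without loss of generality that the schedule never trims the same bamboo twice in immediate succession (the second such trim is wasted, because the bamboo was just reduced to height zero), consecutive trims are separated by travel of at least $d_1$, so the total number of trims in a window of length $T$ is at most $T/d_1 + 1$.

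Combining the two bounds gives $2T/d_1 - O(n) \leq T/d_1 + 1$, i.e.\ $T = O(d_1 n)$, which fails for sufficiently large $T$. This contradiction forces some bamboo to reach the height $d_1/2$. The main points that need careful justification are the ``no consecutive re-trims'' reduction and the minimum-distance claim; the former is a standard wastefulness argument, while the latter is a direct consequence of the spiral's geometry. The counting estimate itself is routine.
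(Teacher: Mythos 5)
Your proof is correct and uses essentially the same double-counting argument as the paper: both exploit the minimum inter-point distance $d_1$ together with the height cap $d_1/2$ to contradict the normalization $\sum_i h_i = 1$. The paper amortizes per unit distance traveled (showing each $v\in G_i$ must receive at least a $d_1/D_v$ fraction of the travel, summing to $4/3 > 1$), while you count trims over a finite window $[0,T]$ and let $T\to\infty$; the two are equivalent up to bookkeeping.
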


\begin{proof}
Assume that there is a schedule
such that the height of each bamboo in $\bigcup_{i=1}^{(\log n)/3} G_i$ is always at most
$d_1/2$.
In such a schedule, each point in each set $G_i$ is serviced after at most distance
$(d_1/2) \cdot (n\log n)/((3-\epsilon)2^i)\le d_1 \cdot (n\log n)/2^{i+2}$.
Since the distance between each two points in $V^*$ is at least $d_1$, the growth rate of
each point in $G_i$ in this schedule must be at least  $2^{i+2} /(n \log n)$.
Thus the sum of the growth rates of all points in $\bigcup G_i$ in this schedule is at least
\[ \sum_{i = 1}^{(\log n)/3} \frac{2^{i+2}}{n\log n} \cdot \frac{n}{2^i} = 4/3,
\]
which is a contradiction since the growth rates of the points
in any valid schedule must sum up to at most $1$.
This contradiction implies that in each valid schedule there is a bamboo which
grows higher  than $d_1/2$.
\ep\end{proof}

\section{Conclusions and open questions}
\label{sec:conclusion}
There are several interesting open questions about approximation algorithms for the BGT problems.
For discrete BGT and simple strategies such as $\ReduceMax$ and $\ReduceFastest$, 
we still do not know the
exact upper bounds on the maximum heights of the bamboos, in relation to $H$, or the exact worst-case approximation ratios
(these two parameters are related but not 
the same).
%

Assuming the growth rates are 
normalized to $H=1$,
the best known upper bound 
on the maximum height of a bamboo
under the $\ReduceFastest$ strategy
is $2.62$ (Bil{\`{o}} {\em et al.}~\cite{BGLPS22}) and 
the best known lower bound 
is $2.01$ (Kuszmaul~\cite{DBLP:conf/spaa/Kuszmaul22}).
The first 
constant bound on the maximum height under the $\ReduceMax$ strategy, shown in~\cite{BGLPS22}, was $9$, and the
current best bound, shown 
in~\cite{DBLP:conf/spaa/Kuszmaul22}, is $4$,
while a simple example shows that bamboos can reach heights arbitrarily close to $2$. 
Can we decrease the upper bounds on the 
maximum height of bamboos in $\ReduceMax$ or
$\ReduceFastest$, or find examples to increase 
the lower bounds?
Similarly, there are gaps between the upper
and lower bounds on the approximation ratios
of $\ReduceMax$ and $\ReduceFastest$, where the upper bounds, so far, come only as straight consequences of the upper bounds 
on the maximum heights mentioned above, and 
the lower bounds are as discussed in 
Section~\ref{sec:Approx-simpleStartegy}.
%
%

Other sets of questions are about general bounds on approximation ratios for the BGT problem.
We showed in Section~\ref{sec:impoved-approx}
a $8/5+o(1)$-approximation algorithm, improving on the previous-best $12/7$-approximation by Van Ee~\cite{Ee21}. 
Can this bound be further improved, ideally achieving a PTAS algorithm?
Another question is whether the $1+O(\sqrt{h_1/H})$ approximation ratio of our algorithm for discrete BGT 
presented in Section~\ref{sec:DiscreteBGT-offline}
can be improved. Can we achieve an approximation ratio of $1+O({h_1/H})$, or even just $1+o({\sqrt{h_1/H}})$?

For continuous BGT, we do not know whether our Algorithm 3, or any other algorithm, achieves
an approximation ratio of  $o(\log n)$. 
The two special cases of continuous BGT -- discrete BGT and the metric TSP -- have both constant-ratio
polynomial-time approximation algorithms, not giving any indication why we should not expect the same for the 
more general problem. 
Note that a constant approximation for the path was given 
by Chuangpishit~{\em et al.}~\cite{DBLP:conf/sofsem/ChuangpishitCGG18},
and a PTAS was presented by Damaschke~\cite{DBLP:conf/iwoca/Damaschke20}.

In this paper we considered only the case of
one robot.
Damaschke~\cite{DBLP:conf/iwoca/Damaschke20,DBLP:conf/atmos/Damaschke21}
studies the case with two robots patrolling on a line.
The work of Bender~{\it et al.}~\cite{DBLP:conf/stoc/BenderFK19}
and Kuszmaul~\cite{DBLP:conf/spaa/Kuszmaul22}
on cup emptying games 
includes a methodology 
of transferring results from single- to multi-processor
scenarios.
Generally, however, the studies of perpetual scheduling for multi-robot scenarios have been so far
rather limited, but we expect that this will change.
%

\bibliographystyle{plain}
\bibliography{BGT_refs}

\end{document}